\title{CSD Homomorphisms Between Phylogenetic Networks
}
\author{Stephen J. Willson\\
		Department of Mathematics\\
		Iowa State University\\
		Ames, IA 50011 USA\\
		swillson@iastate.edu}
\newtheorem{lem}{Lemma}[section]
\newtheorem{thm}[lem]{Theorem}
\newtheorem{cor}[lem]{Corollary}
\begin{document}

\IEEEcompsoctitleabstractindextext{%
\begin{abstract} Since Darwin, species trees have been used as a simplified description of the relationships which summarize the complicated network $N$ of reality.  Recent evidence of hybridization and lateral gene transfer, however, suggest that there are situations where trees are inadequate.  Consequently it is important to determine properties that characterize networks closely related to $N$ and possibly more complicated than trees but lacking the full complexity of $N$.  

A connected surjective digraph map (CSD) is a map $f$ from one network $N$ to another network $M$ such that every arc is either collapsed to a single vertex or is taken to an arc, such that $f$ is surjective, and such that the inverse image of a vertex is always connected.  CSD maps are shown to behave well under composition.  It is proved that if there is a CSD map from $N$ to $M$, then there is a way to lift an undirected version of $M$ into $N$, often with added resolution.  A CSD map from $N$ to $M$ puts strong constraints on $N$.  

In general, it may be useful to study classes of networks such that, for any $N$, there exists a CSD map from $N$ to some standard member of that class.  
\end{abstract}
\begin{keywords}
digraph; network; connected; hybrid; phylogeny; homomorphism.
\end{keywords}}

\maketitle

\IEEEdisplaynotcompsoctitleabstractindextext



\section{ Introduction}  

This paper is a modified version of \cite{wil12}.  This version slightly generalizes the notion there of a CSD map but obtains essentially the same results with essentially the same proofs.  

Since Darwin, phylogenetic trees have been used to display relationships among species, and they have become a standard tool in phylogeny.  More recently, in order to deal with the possibilities of such events as hybridization and lateral gene transfer, more general phylogenetic networks have become of interest  \cite{mor04}, \cite{nak04}, \cite{doo07}, \cite{dag08},  \cite{clr08}, \cite{mor09}.    Different researchers have found it useful to make a broad range of assumptions about the networks in order to be able to obtain various results.  

The underlying reality for, say, successive sexually reproducing populations of diploid organisms, is a complicated network $N$ of parents and children of individual organisms---a full genealogy reaching back to ancestors in the remote past.  Trying to reconstruct such a reality from extant taxa is a hopeless goal.  Instead, we have often relied on a species tree $T$ as a phylogeny at a more abstract level.  In principle, the underlying complicated network $N$ has been usefully transformed into the much simpler species tree $T$.  

This paper explores relationships between $N$ and other related networks $M$, potentially much simpler than $N$, but perhaps more complicated than trees.  Other researchers have looked at similar problems.  General frameworks for networks are discussed in \cite{ban92}, \cite{bar04}, \cite{mor04}, and \cite{nak04}.  Typically these frameworks model phylogenies by acyclic rooted directed graphs. 
Wang {\it et al.} \cite{wan01} and Gusfield {\it et al.} \cite{gel04a} study ``galled trees'' in which all recombination events are associated with node-disjoint recombination cycles.  Van Iersel and others generalized galled trees to ``level-$k$'' networks \cite{ier09}.  Baroni, Semple, and Steel \cite{bar04} introduced the idea of a ``regular" network, which coincides with its cover digraph.  Cardona {\it et al.} \cite{crv09} discussed ``tree-child" networks, in which every vertex not a leaf has a child that is not a reticulation vertex.  Dress {\it et al.}  \cite{dre10} consider alternative ways to derive trees, or, more generally, hierarchies from a network.  

Let $N$ and $M$ be phylogenetic $X$-networks.   Such networks are rooted directed graphs with specified leaf set $X$.  (Further details are given in section 2).  The basic tool studied in this paper is that of a connected surjective digraph (CSD) map $f: N \to M$.  A formal definition is in section 3, but, roughly, such a map $f$ is a map on the vertex sets,
 $f: V(N) \to V(M)$, satisfying \\
(1) $f$ is onto; \\
(2) whenever $(u,v)$ is an arc of $N$, then either $(f(u), f(v))$ is an arc of $M$, or else $f(u) = f(v)$, and every arc of $M$ arises in this manner;\\
(3) for each vertex $v'$ of $M$, $f^{-1}(v')$ consists of the vertices of a connected subgraph of $N$.

CSD maps are special cases of graph homomorphisms, which have been the subject of recent investigations, including a recent book \cite{hel04} by Hell and Ne\v set\v ril.  A review of graph homomorphisms, especially with applications to colorings, is in Hahn and Tardif \cite{hah97}.  These studies do not include studies of homomorphisms with property (3).  Work by Daneshgar {\it et al.} \cite{dan08} concerns ``connected graph homomorphisms'' but with a very different notion of connectedness, requiring that the inverse image of an edge be empty or connected.

Figure 1 shows a network $N$ and a network $N'$ which happens to be a tree.  There is a CSD map $f : N \to N'$.  Each vertex $v$ in $N$ is labelled by the name of the vertex $f(v)$ in $N'$.  The set of leaves, corresponding to extant taxa, is $X = \{1,2,3,4\}$.  In this particular case, the tree $N'$ is a plausible candidate for the ``species tree'' corresponding to $N$.  

The networks $M$ for which there is a CSD map from $N$ to $M$ are seen in section 3 to arise as certain quotient structures of $N$ in a natural way.

\begin{figure}[!htb]  
\begin{center}

\begin{picture}(200,390) 
\thicklines
\put(100,380){\vector(1,-1){30}}
\put(100,380){\vector(-1,-1){30}}
\put(70,350){\vector(-1,-1){30}}
\put(70,350){\vector(1,-1){30}}
\put(40,320){\vector(-1,-1){30}}
\put(10,290){\vector(0,-1){30}}
\put(10,260){\vector(0,-1){30}}
\put(10,230){\vector(0,-1){30}}
\put(10,200){\vector(0,-1){30}}
\put(10,170){\vector(1,-1){30}}
\put(40,260){\vector(1,-1){30}}
\put(70,230){\vector(1,-1){30}}
\put(100,200){\vector(0,-1){30}}
\put(100,170){\vector(0,-1){30}}
\put(100,320){\vector(1,-1){30}}
\put(100,320){\vector(-1,-1){30}}
\put(70,290){\vector(-1,-1){30}}
\put(130,290){\vector(1,-1){30}}
\put(160,260){\vector(-1,-1){30}}
\put(130,230){\vector(1,-1){30}}

\put(101,380){\vector(1,-1){30}}   
\put(101,380){\vector(-1,-1){30}}
\put(71,350){\vector(-1,-1){30}}
\put(71,350){\vector(1,-1){30}}
\put(41,320){\vector(-1,-1){30}}
\put(11,290){\vector(0,-1){30}}
\put(11,260){\vector(0,-1){30}}
\put(11,230){\vector(0,-1){30}}
\put(11,200){\vector(0,-1){30}}
\put(11,170){\vector(1,-1){30}}
\put(41,260){\vector(1,-1){30}}
\put(71,230){\vector(1,-1){30}}
\put(101,200){\vector(0,-1){30}}
\put(101,170){\vector(0,-1){30}}
\put(101,320){\vector(1,-1){30}}
\put(101,320){\vector(-1,-1){30}}
\put(71,290){\vector(-1,-1){30}}
\put(131,290){\vector(1,-1){30}}
\put(161,260){\vector(-1,-1){30}}
\put(131,230){\vector(1,-1){30}}

\thinlines
\put(10,260){\vector(1,-1){30}}
\put(70,230){\vector(0,-1){30}}
\put(40,320){\vector(1,-1){30}}
\put(70,290){\vector(1,-1){30}}
\put(100,260){\vector(1,-1){30}}
\put(10,290){\vector(1,-1){30}}
\put(40,260){\vector(-1,-1){30}}
\put(100,260){\vector(-2,-1){60}}
\put(130,290){\vector(-1,-1){30}}
\put(40,230){\vector(-1,-1){30}}
\put(40,230){\vector(0,-1){30}}
\put(40,230){\vector(1,-1){30}}
\put(70,200){\vector(-1,-1){30}}
\put(70,200){\vector(1,-1){30}}
\put(100,200){\vector(1,-1){30}}
\put(130,170){\vector(-1,-1){30}}
\put(40,200){\vector(-1,-1){30}}
\put(10,200){\vector(1,-1){30}}
\put(40,170){\vector(0,-1){30}}

\put(110, 380){1234}
\put(40,380){$N$}
\put(135,350){1}
\put(50,350){234}
\put(20,320){234}
\put(110,320){234}
\put(-4,290){34}
\put(80, 290){234}
\put(135,290){234}
\put(-4,260){34}
\put(50,260){34}
\put(110,260){234}
\put(165,260){$2$}
\put(140,230){$2$}
\put(165,200){2}
\put(0,230){$4$}
\put(40,240){34}
\put(75,230){34}
\put(0,200){$4$}
\put(45,200){$4$}
\put(75,200){34}
\put(105,200){$3$}
\put(0,170){$4$}
\put(45,165){$4$}
\put(105,170){$3$}
\put(135,170){$3$}
\put(45,140){$4$}
\put(105,135){$3$}

\put(100,90){\vector(-1,-1){30}}
\put(100,90){\vector(1,-1){30}}
\put(70,60){\vector(-1,-1){30}}
\put(70,60){\vector(1,-1){30}}
\put(40,30){\vector(-1,-1){30}}
\put(40,30){\vector(1,-1){30}}

\put(40,65){$N'$}
\put(100,95){1234}
\put(110,60){1}
\put(80,60){234}
\put(50,20){34}
\put(105,20){$2$}
\put(0,0){$4$}
\put(74,0){$3$}

\end{picture}

\caption{   Two $X$-networks $N$ and $N'$, in which $N'$ happens to be a tree with topology (1,(2,(3,4))).  
There is a CSD map $f$ from $N$ to $N'$, given by the labelling of vertices in $N$. In fact, section 5 shows $N' = ClDis(N)$.   A certain tree $U$ with topology (1,(4,(2,3)))   displayed by $N$ as a subgraph is shown in bold. There is, however, no CSD map from $N$ to $U$.  }
\end{center}
\end{figure}
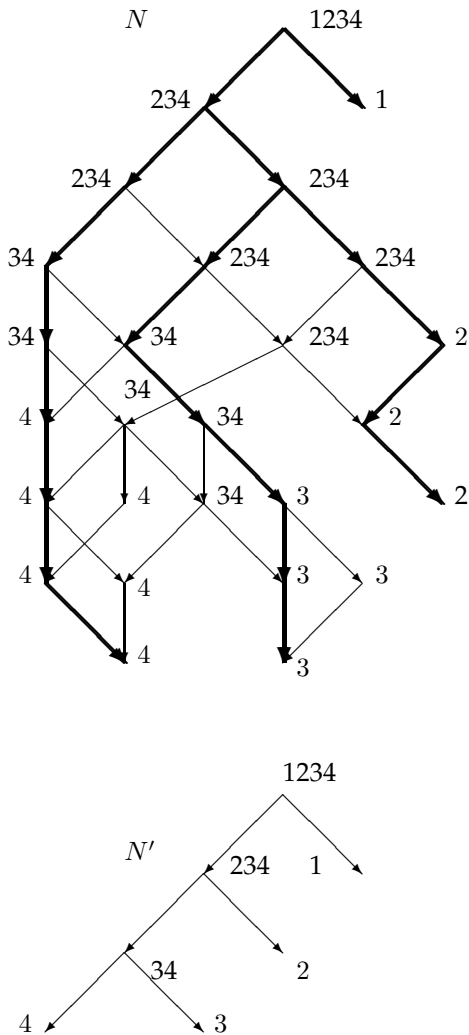

The condition that there be a CSD map $f: N \to N'$  is very different from the condition that $N'$ be displayed by $N$; {\it i.e.} that $N$ contain a directed $X$-subgraph homeomorphic with $N'$.  If $N$ is the network in Figure 1,  there is a CSD map $f$ from $N$ to the tree $N'$ with topology (1,(2,(3,4))).  It is true that the tree $(1,(2,(3,4)))$ is displayed in $N$.   But another tree $U$ with topology (1,(4,(2,3))) shown in bold in Figure 1 is also displayed in $N$, yet there is no CSD map from $N$ to $U$.   If we restrict the map $f$ to the tree $U$ to yield the map $f|U$ from $U$ to $N'$, then $f|U$ remains a surjective digraph map from $U$ onto $N'$, but it is not connected since the preimage of vertex 34 is no longer connected.  

The  essential condition for a CSD map $f:N\to M$ is (3), that for each vertex $v$ of M the points of $N$ mapping to $v$ induce a connected subgraph of $N$.  In retrospect, this condition appears natural:  The essential topological property of a single point is that it is connected, {\it i.e.,} all in one piece.  The essential biological property of a single population is that it is connected, since an organism arises only from another organism.  In order to find natural relationships among networks $N$ and $M$, we assume here that the points of $N$ corresponding to a single vertex in $M$ should therefore also be connected.  

In this paper it is proved (Theorem 4.1) that whenever $f: N \to N'$ is a CSD map, then $N'$ can be ``lifted'' into $N$ possibly in many ways, each called a \emph{wired lift} in this paper.  Any wired lift is an undirected subgraph of $N$ resembling $N'$ but possibly containing more resolution.  Thus some aspects of $N'$ also are exhibited in $N$.  The fact that each $f^{-1}(v')$ is connected is essential to this possibility.  More generally, if $f: N \to N'$ is a CSD map, then $N'$ places strong constraints on the structure of $N$.  In contrast, it is shown (Theorem 4.5) that without the connectedness property, the constraints on $N$ would be minimal.  

For example, it is common to study trees or more generally networks in terms of the quartets or rooted triples that they exhibit \cite{sem03} p. 116.     Suppose $f: N \to N'$ is a CSD map and $N'$ displays a particular quartet $ab|cd$ in the sense that the undirected graph of $N'$ contains a subgraph homeomorphic with the given quartet.  Using the wired lift, Corollary 4.3 shows that the undirected version of $N$ also contains a subgraph homeomorphic with the given quartet.  Hence quartet information in $N'$ is preserved in $N$, again placing strong constraints on the structure of $N$.

Suppose $X$ denotes the leaf set of the networks, corresponding to the set of extant species on which measurements may be made.  Following \cite{bar04} define the \emph{cluster} of a vertex $v$ in the network $N$, denoted $cl(v,N)$, to be the set of members of $X$ which are descendents of $v$.  A network $N$ is called \emph{successively cluster-distinct} if, whenever $(u,v)$ is an arc of $N$, then $cl(u,N) \neq cl(v,N)$.  

In section 5, given $N$ we show how to construct a well-defined network $ClDis(N)$ which is successively cluster-distinct and such that there is a CSD map $f: N \to ClDis(N)$.  For example, if $N$ is the network in Figure 1, then  $N' = ClDis(N)$.  The network $ClDis(N)$ potentially is vastly simpler than $N$, although it need not be a tree in general.  The wired lift of $ClDis(N)$ into $N$ shows that in some sense $ClDis(N)$ can act as a ``skeleton'' of $N$.  It is shown (Corollary 5.5) that $ClDis(N)$ has a ``universal'' property uniquely identifying it among all  cluster-distinct networks related to $N$.  This raises possible interest in the study of successively cluster-distinct networks as a tool for studying general phylogenetic networks.  

One theme of this paper is the transformation of networks.  Common standard operations to transform  graphs include:\\
(1) contraction of an arc or edge to a point; and  \\
(2) deletion of an arc or edge.  \\
Theorem 3.6 puts CSD maps in this broader context.  Suppose $f: N \to N'$ is a CSD map.  Then there is a way to obtain $N'$ by starting with $N$ and recursively contracting arcs to points without ever explicitly deleting any arcs.

Section 6 discusses some implications of these results.

\section {Fundamental Concepts} 

A \emph{directed graph} or \emph{digraph} $N=(V,A)$ consists of a finite set $V$ of \emph{vertices} and a finite set $A$ of \emph{arcs}, each consisting of an ordered pair $(u,v)$ where $u \in V$, $v \in V$, $u\neq v$.  Sometimes we write $V(N)$ for $V$ or $A(N)$ for $A$.   We interpret $(u,v)$ as an arrow from $u$ to $v$ and say that the arc \emph{starts} at $u$ and \emph{ends} at $v$.  There are no multiple arcs and no loops.   If $(u,v)\in A$, say that $u$ is a \emph{parent} of $v$ and $v$ is a \emph{child} of $u$.  A \emph{directed path} is a sequence $u_0, u_1, \cdots, u_k$ of vertices such that for $i = 1, \cdots, k$, $(u_{i-1}, u_i) \in A$.  The path is \emph{trivial} if $k = 0$.  Write $u \leq  v$ if there is a directed path starting at $u$ and ending at $v$.  The digraph is \emph{acyclic} if there is no nontrivial directed path starting and ending at the same point.  If the digraph is acyclic, it is easy to see that $\leq$  is a partial order on $V$.  

The \emph{indegree} of vertex $u$ is the number of $v \in V$ such that $(v,u) \in A$.  The \emph{outdegree} of $u$ is the number of $v \in V$ such that $(u,v) \in A$.  
A \emph{leaf} is a vertex of outdegree 0. A \emph{normal} (or \emph{tree-child}) vertex is a vertex of indegree 1.  A \emph{hybrid} vertex (or \emph{recombination vertex} or \emph{reticulation node}) is a vertex of indegree at least 2.

The digraph $(V,A)$ is \emph{rooted} if there exists  a unique vertex $r \in V$ with indegree 0 such that for all $v \in V$, $r\leq v$.  The vertex $r$ is called the \emph{root}.

Let $X$ denote a finite set.  Typically in phylogeny, $X$ is a collection of species.  An \emph{$X$-network} $(V,A,r,X)$ is a 
digraph $G=(V,A)$ with root $r$ such that \\
(1) there is a one-to-one map 
$\phi: X \to V$ 
such that the image of $\phi$ is the set of all leaves of $G$, and \\
(2) for every $v \in V$ there is a leaf $u$ and a directed path from $v$ to $u$. \\
Thus the set of leaves of $G$ may be identified with the set $X$ and every vertex is ancestral to a leaf.  

In biology most $X$-networks are acyclic.  The set $X$ provides a context for $G$, giving a hypothesized relationship among the members of $X$.  For convenience, we will write $x$ for the leaf $\phi(x)$.  

An \emph{$X$-tree} is an $X$-network such that the underlying digraph is a rooted tree.

Let $N = (V,A,r,X)$ and $N' = (V',A',r',X)$ be $X$-networks.  An \emph{X-isomorphism} $\psi: N \to N'$ is a map $\psi: V \to V'$ such that\\
(1) $\psi: V \to V'$ is one-to-one and onto, \\
(2) $\psi(r) = r'$,\\
(3) for each $x \in X$, $\psi(x) = x$,\\
(4) $(\psi(u), \psi(v))$ is an arc of $N'$ iff $(u,v)$ is an arc of $N$.\\
We say $N$ and $N'$ are \emph{isomorphic} if there is an $X$-isomorphism $\psi: N \to N'$.

A \emph{graph} (or, for emphasis, an \emph{undirected graph}) $(V,E)$ consists of a finite set $V$ of \emph{vertices} and a finite set $E$ of \emph{edges}, each consisting of a subset $\{v_1,v_2\}$ where $v_1$ and $v_2$ are  two distinct members of $V$.  Thus an edge has no direction, while an arc has a direction.  If $u \in V$, then the \emph{total degree} of $u$ is the number of edges in $E$ containing $u$.  If $G=(V,E)$ is a graph and $W$ is a subset of $V$, the \emph{induced subgraph} $G[W]$ is the graph $(W,E[W])$ where the edge set $E[W]$ is the collection of all $\{v_1,v_2\}$ in $E$ such that $v_1 \in W$ and $v_2 \in W$.  Thus $G[W]$ contains all edges both of whose endpoints are in $W$. 

If $G=(V,E)$ is a graph and $\{v_1,v_2\}$ is an edge, then a new graph $G'=(V',E')$ may be obtained by adding a new vertex $v_3 \notin V$, removing $\{v_1,v_2\}$ and adding two new edges $\{v_1,v_3\}$ and $\{v_2,v_3\}$.  Thus the new vertex $v_3$ has total degree 2 in $G'$.  We say that $G$ is obtained from $G'$ by \emph{suppressing} the vertex $v_3$ of total degree 2 and $G'$ is obtained from $G$ by \emph{inserting} the vertex $v_3$ of total degree 2.  We say that $G$ and $G''$ are \emph{homeomorphic} if there is a sequence $G = G_0, G_1, \cdots, G_k$ of graphs such that for $i = 1, \cdots, k$,  $G_i$ is obtained from $G_{i-1}$ either by inserting a vertex of total degree 2 or by suppressing a vertex of total degree 2.  

A graph $G=(V,E)$ is \emph{connected} if, given any two distinct $v$ and $w$ in $V$ there exists a sequence
$v = v_0, v_1, v_2, \cdots, v_k = w$ of vertices  such that for $i = 0, \cdots, k-1$, $\{v_i, v_{i+1}\}\in E$.  
A subset $W$ of $V$ is \emph{connected} if the induced subgraph $G[W]$ is connected.

Given a digraph $G=(V,A)$ define $Und(G) = (V,E)$ where $E = \{\{u,v\}:$ there is an arc $(u,v) \in A\}$.  Then $Und(G)$ is an undirected graph with the same vertex set as $G$ and with edges obtained by ignoring the directions of arcs.  A subset $W$ of $V$ is \emph{connected} if the induced subgraph $Und(G)[W]$ is connected.   Thus a connected subset of $G$ is defined ignoring the directions of arcs.

\section{ New Connected Surjective Digraph Maps} 

Let $N=(V,A,r,X)$ and $N'=(V',A',r',X)$ be  $X$-networks whose leaf sets are identified with the same set $X$.  A \emph{ new X-digraph map}   $f: N\to N'$
is a map $f: V \to V'$ such that\\
(a) $f(r) = r'$,\\
(b) for all $x \in X$, $f(x) = x$, and \\
(c) if $(u,v)$ is an arc of $N$, then there is a directed path in $N'$ from $f(u)$ to $f(v)$.
A new X-digraph map f is a \emph{strong} or an \emph{old X-digraph map} provided 
(a) $f(r) = r'$,\\
(b) for all $x \in X$, $f(x) = x$, and \\
(c) if $(u,v)$ is an arc of $N$, then either $f(u) = f(v)$ or else $(f(u), f(v))$ is an arc of $N'$.\\
A strong X-digraph map is the same as the notion of X-digraph map in \cite{wil12}. Note that a strong map is clearly also a new map, but the converse is false.   

This allows the possibility that $f(u) = f(v)$ or $(f(u), f(v))$ is an arc of $N'$, as well as other possibilities.  

If $v' \in V'$, the set $f^{-1}(v')$ consists of all vertices $u \in V$ such that $f(u) = v'$.  In particular, if $(u,v)$ is an arc of $N$ and there is a directed path in $N'$ from $f(u)$ to $f(v)$ which passes through $v'$ but neither $f(u) = v'$ nor $f(v) = v'$, then neither $u$ nor $v$ is in $f^{-1}(v')$, nor does the directed path in $N'$ from $f(u)$ to $f(v)$ contribute any point to $f^{-1}(v')$.  

Call $f$ \emph{new connected} if for each $v' \in V'$, $f^{-1}(v')\subseteq V$ is a connected subset of $N$, {\it i.e.}, if the induced subgraph $Und(N)[f^{-1}(v')]$ is  connected.  Call $f$ \emph{surjective} if for each $v' \in V'$, $f^{-1}(v')$ is a nonempty subset of $V$ and for each arc $(a,b)$ of $N'$ there exist vertices $u$ and $v$ of $N$ such that $(u,v)$ is an arc of $N$, $f(u) = a$, and $f(v) = b$.  
The \emph{kernel} of $f$ is the partition $\{f^{-1}(v') :v' \in V'\}$ of $V$.  

We are interested primarily in new $X$-digraph maps that are both new connected and surjective.  They will be called \emph{new connected surjective digraph maps}  or \emph{NCSD maps}.  Many of their properties are analogous to properties of homomorphisms \cite{hel04} but properties involving the leaf set $X$ and connectivity require special attention. 

Let $N=(V,A,r,X)$ be an $X$-network, where $\phi:X \to V$ gives the identification.  Suppose $\sim$ is an equivalence relation on $V$.   Let  $[v]$ denote the equivalence class of $v \in V$.  The \emph{partition} of $V$ into equivalence classes is  $\mathcal{P} = \{[v]: v \in V\}$. 
The equivalence relation $\sim$ and the partition $\mathcal{P}$ are  called \emph{leaf-preserving} provided that  no two distinct leaves are equivalent and, in addition, for every $x \in X$ whenever $u \in [x]$ and $(u,v)$ is an arc, then $v \in [x]$. 

Let $N=(V,A,r,X)$ be an $X$-network.  Suppose $\sim$ is an equivalence relation on $V$ with partition $\mathcal{P}$.  Define the \emph{quotient digraph} $N'$   by $N' = (V',A',r',X)$ where \\
(i) $V' = \mathcal{P}$ is the set of equivalence classes $[v]$.\\
(ii) $r' = [r]$.\\
(iii) The member $x\in X$ corresponds to $[x]$; {\it i.e.}, the identification is given by $\phi': X \to V'$ by $\phi'(x) = [\phi(x)]$.\\
(iv) Let $[u]$ and $[v]$ be two equivalence classes.  There is an arc $([u], [v]) \in A'$ iff
$[u] \neq [v]$  and there exists $u' \in [u]$ and $v' \in [v]$ such that $(u',v')\in A$.\\
Alternative notations for $N'$ will be  $N/\sim$ or $N/\mathcal{P}$.  

\begin{thm}
Let $N = (V,A,r,X)$ be an  $X$-network.  Suppose $\sim$ is a leaf-preserving equivalence relation on $V$.  Let $N' = N/\sim  \:= (V',A',r',X)$ be the quotient digraph. Then \\
(1) $N'$ is an $X$-network.\\
(2) The natural map $\phi: N \to N'$ 
given by $\phi(u) = [u]$
is a surjective strong $X$-digraph map with kernel the set of equivalence classes under $\sim$.\\
(3)  If each equivalence class $[u]$ is connected in $N$, then $\phi$ is new connected.
\end{thm}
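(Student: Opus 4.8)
\emph{Plan.} All three parts reduce to unwinding the quotient construction of Section~3; the single step with genuine content is pinning down the leaf set of $N'$ in part~(1), and the tool used repeatedly is a remark about images of directed paths, which I would establish first. If $u_0,u_1,\dots,u_k$ is a directed path in $N$, then $[u_0],[u_1],\dots,[u_k]$ is a walk in $N'$ whose successive entries are either equal or joined by an arc of $N'$ (by clause~(iv) of the construction), so after deleting repeated consecutive entries one is left with a directed path in $N'$ from $[u_0]$ to $[u_k]$. In particular, $u\le v$ in $N$ implies $[u]\le[v]$ in $N'$; I will call this the path remark.

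\emph{Part (1).} That $(V',A')$ is a finite digraph is immediate --- $V'$ is a set of subsets of the finite set $V$, clause~(iv) forbids loops, and there are no repeated arcs. To see $r'=[r]$ is a root: given $[v]\in V'$ we have $r\le v$ in $N$, hence $[r]\le[v]$ by the path remark. For the leaf set I would argue in three steps. First, for $x\in X$ the leaf-preserving hypothesis says every arc of $N$ out of a vertex of $[x]$ has its head again in $[x]$, so by~(iv) no arc of $N'$ leaves $[x]$; thus $[x]$ is a leaf of $N'$, and (again using leaf-preservation) distinct leaves of $N$ remain in distinct classes, so $\phi'$ is one-to-one. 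Second, if $[w]$ is a leaf of $N'$, choose a leaf $x$ of $N$ with $w\le x$ (possible since $N$ is an $X$-network); the path remark gives $[w]\le[x]$, but $[w]$ has no outgoing arc, so this directed path is trivial and $[w]=[x]\in\phi'(X)$. Hence the leaves of $N'$ are exactly $\phi'(X)$. Third, for arbitrary $[v]\in V'$ pick a leaf $x$ of $N$ with $v\le x$; then $[v]\le[x]$ and $[x]$ is a leaf of $N'$, so every vertex of $N'$ is ancestral to a leaf. Therefore $N'$ is an $X$-network.

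\emph{Parts (2) and (3).} For~(2) I would simply verify the three clauses defining an $X$-digraph map: $\phi(r)=[r]=r'$; for $x\in X$, $\phi$ sends the leaf $x$ to $[x]$, which is exactly how $x$ is realized in $N'$ under $\phi'$; and if $(u,v)\in A$ then either $[u]=[v]$ or $([u],[v])\in A'$ by~(iv). Surjectivity then has two halves: every $[v]\in V'$ has the representative $v$ in its preimage, and for an arc $([u],[v])$ of $N'$ clause~(iv) supplies $u'\in[u]$ and $v'\in[v]$ with $(u',v')\in A$, an arc that $\phi$ carries onto $([u],[v])$. Since $\phi^{-1}([v])=\{u\in V:[u]=[v]\}=[v]$, the kernel of $\phi$ is precisely the partition $\mathcal{P}$ into $\sim$-classes. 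Part~(3) is then immediate: $\phi^{-1}([v])=[v]$, and the added hypothesis says exactly that each such set induces a connected subgraph of $Und(N)$, which is the definition of $\phi$ being connected.

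\emph{Main obstacle.} There is no lengthy computation anywhere. The only place the argument carries weight rather than bookkeeping is the identification of the leaves of $N'$ in part~(1); getting both inclusions requires the path remark together with the leaf-preserving hypothesis, and that hypothesis is exactly what stops a leaf from being merged with an interior vertex or with another leaf --- so this is the step I would be most careful to state precisely. Everything in parts~(2) and~(3) is formal once the Section~3 definitions of ``surjective'', ``kernel'', and ``connected'' are in hand, and part~(3) is barely more than a restatement of the hypothesis.
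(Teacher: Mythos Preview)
Your proof tracks the paper's almost line for line: both isolate the ``path remark'' first, use it for the root and for the ancestral-to-a-leaf clause, argue the two inclusions for the leaf set separately via leaf-preservation, and then dispatch (2) and (3) by the one-line observation $\phi^{-1}([v])=[v]$. Your treatment of the converse leaf inclusion (pick $x$ with $w\le x$, apply the path remark, conclude $[w]=[x]$) is in fact a slightly cleaner packaging of the paper's contradiction argument.

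One caution: your parenthetical that ``(again using leaf-preservation) distinct leaves of $N$ remain in distinct classes'' is not actually justified by the definition given. Leaf-preservation only constrains arcs \emph{leaving} a vertex of $[x]$; since leaves have no outgoing arcs, nothing prevents two leaves $x,y$ from satisfying $x\sim y$ (take $[x]=\{x,y\}$ --- the condition is vacuous). The paper's proof does not verify injectivity of $\phi'$ either, so this is a lacuna shared with the original rather than a defect of your approach; in the paper's actual applications (e.g.\ the Cluster-Distinct construction) distinct leaves have distinct clusters and are never merged, which is presumably why the point passes silently. If you want a clean statement you would need to add, as an explicit hypothesis, that distinct leaves lie in distinct classes.
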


\begin{proof}
(1) It is immediate that $(V',A')$ is a directed graph with no loops and no multiple arcs.  If $u_0, u_1, \cdots, u_k$ is a directed path in $N$ (so for $i = 0, \cdots, k-1$, $(u_i,u_{i+1}) \in A$), then $[u_0], [u_1], \cdots, [u_k]$ is a sequence of vertices in $N'$ and for each $i = 0, \cdots, k-1$,  there is a directed path from $[u_i]$ to $u_{i+1}$ in $A'$.   It follows that $r'$ is a root of $N'$. 

Suppose $x\in X$; we show that $[x]$ is a leaf of $N'$.  Suppose there is an arc $([x] ,[y])$.  Then there exist $a\in [x]$ and $b\in [y]$ such that $(a,b) \in A$.  Since $\sim$ is leaf-preserving, $b \in [x]$ so $[x]=[y]$, contradicting that there are no loops in $(V',A')$.

Conversely, suppose that $[u]$ is a leaf of $N'$; I claim that there exists $x\in X$ such that $[u] = [x]$.  If not, then no vertex of $N$ in $[u]$ is a leaf, since $X$ is identified with the set of leaves.  Since $N$ is an $X$-network, we may choose a directed path in $N$ starting at $u$ to some leaf $x$.  Since $x$ is a leaf, $x \notin [u]$, so $N'$ has an arc from $[u]$ to some other vertex, contradicting that $[u]$ is a leaf.   

Finally, given a vertex $[u] \in V'$, note that there is a leaf $x \in X$ such that $N$ contains a directed path from $u$ to $x$; it follows that in $N'$ there is a directed path from $[u]$ to $[x]$.

(2) We check the conditions (a), (b), and (c) for being a strong $X$-digraph map. Condition (a) is immediate.  For (b), note that if $x \in X$, then $\phi(x) =[x]$.  To see (c), suppose $(u,v)$ is an arc of $N$.  Then either $[u] = [v]$ or else $([u], [v])$ is an arc of $N'$.  To see surjectivity, it is immediate that $\phi^{-1}([u]) = [u]$ is nonempty.  Given an arc $([u],[v])$ of $N'$ there exist $u' \in [u]$ and $v' \in [v]$ such that $(u',v') \in A$, but then $\phi(u') = [u]$ and $\phi(v') = [v]$.  To see surjectivity, it is immediate that $\phi^{-1}([u]) = [u]$ is nonempty.  Given an arc $([u],[v])$ of $N'$ there exist $u' \in [u]$ and $v' \in [v]$ such that $(u',v') \in A$, but then $\phi(u') = [u]$ and $\phi(v') = [v]$. 

(3) follows since $\phi^{-1}([u]) = [u]$.
\end{proof}

If $N$ is acyclic, it need not follow that $N'$ is also acyclic.  

The following converse shows that the image of a surjective digraph map is essentially the same as the natural quotient digraph.  

\begin{thm}
Let $N = (V,A,r,X)$ and $N' = (V',A',r',X)$ be $X$-networks. 
Suppose $f: N \to N'$ is a surjective strong $X$-digraph map.  Define the relation $\sim$ on $V$  by $u\sim v$ iff $f(u) = f(v)$.   Then $\sim$ is a leaf-preserving equivalence relation and the equivalence classes are $[u] = f^{-1}(f(u))$.  Moreover the quotient digraph $N/\sim$ is isomorphic with $N'$ via the map $\phi: N/\sim \: \to N'$ given by
$\phi ([u]) = f(u)$.
\end{thm}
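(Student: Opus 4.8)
The plan is to verify each claim in turn, reusing the hard work already done in Theorem 3.1. First I would check that $\sim$, defined by $u \sim v$ iff $f(u) = f(v)$, is an equivalence relation: this is automatic since it is the kernel partition of a function. The equivalence class of $u$ is then $\{v : f(v) = f(u)\} = f^{-1}(f(u))$ by definition, giving the stated description $[u] = f^{-1}(f(u))$.

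Next I would verify that $\sim$ is leaf-preserving. Suppose $x \in X$, $u \in [x]$, and $(u,v)$ is an arc of $N$. Since $u \in [x]$ we have $f(u) = f(x) = x$ (using condition (b) for an $X$-digraph map). By condition (c), either $f(u) = f(v)$, in which case $f(v) = x$ so $v \in [x]$ and we are done, or else $(f(u), f(v)) = (x, f(v))$ is an arc of $N'$. But $x$ is a leaf of $N'$ (it has outdegree $0$), so the latter is impossible. Hence $v \in [x]$, and $\sim$ is leaf-preserving. With this in hand, Theorem 3.1 applies: $N/\sim$ is an $X$-network and the natural map is a surjective $X$-digraph map.

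Then I would show the map $\phi: N/\sim \to N'$ given by $\phi([u]) = f(u)$ is a well-defined $X$-isomorphism. Well-definedness: if $[u] = [v]$ then $f(u) = f(v)$ by definition of $\sim$, so the value does not depend on the representative. I then check the four conditions for an $X$-isomorphism. It is one-to-one: $\phi([u]) = \phi([v])$ means $f(u) = f(v)$, i.e. $u \sim v$, i.e. $[u] = [v]$. It is onto: given $v' \in V'$, surjectivity of $f$ gives some $u$ with $f(u) = v'$, so $\phi([u]) = v'$. It sends root to root: $\phi([r]) = f(r) = r'$. It fixes $X$: $\phi([x]) = f(x) = x$. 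Finally, the arc condition: if $([u],[v])$ is an arc of $N/\sim$, then by the definition of the quotient there are $u' \in [u]$, $v' \in [v]$ with $(u',v') \in A$, and $[u] \neq [v]$ means $f(u') \neq f(v')$, so condition (c) forces $(f(u'), f(v')) = (f(u), f(v)) = (\phi([u]), \phi([v]))$ to be an arc of $N'$; conversely, if $(\phi([u]), \phi([v])) = (f(u), f(v))$ is an arc of $N'$, then surjectivity of $f$ on arcs yields $u' , v'$ with $(u',v') \in A$ and $f(u') = f(u)$, $f(v') = f(v)$, so $u' \in [u]$, $v' \in [v]$, and since $f(u) \neq f(v)$ we have $[u] \neq [v]$, whence $([u],[v])$ is an arc of $N/\sim$.

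**The main obstacle**, such as it is, is largely bookkeeping: making sure the arc condition is checked in \emph{both} directions, since $X$-isomorphism requires the biconditional in (4), and being careful that the ``$[u] \neq [v]$'' clause in the quotient's arc definition lines up correctly with ``$f(u) \neq f(v)$'' — this is where the surjectivity hypothesis on arcs of $N'$ is genuinely used. There is no deep difficulty; the proof is a routine unwinding of definitions once one observes that $\sim$ is exactly the kernel of $f$ and that leaf-preservation follows from $x$ being a leaf in $N'$.
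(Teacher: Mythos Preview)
Your proposal is correct and follows essentially the same approach as the paper's proof: verify leaf-preservation via the fact that $x$ is a leaf in $N'$, then check the arc biconditional using the quotient definition for one direction and surjectivity of $f$ on arcs for the other. You are somewhat more explicit than the paper in checking well-definedness of $\phi$ and all four clauses of the $X$-isomorphism definition, but the substance is identical.
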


\begin{proof}
It is immediate that $\sim$ is an equivalence relation and that $\phi$ is one-to-one and onto.  To see that it is leaf-preserving, suppose $x \in X$, $u \in V$ satisfies $u \in f^{-1}(x)$, and $v \in V$ satisfies that $(u,v)$ is an arc. We must show that $v \in f^{-1}(x)$.   But since $f$ is a digraph map, either $f(u) = f(v)$ or $(f(u), f(v))$ is an arc.  In the former case $f(v) = f(u) = x$; in the latter case there is an arc from $f(u) = x$ to $f(v)$, contradicting that $x$ is a leaf in $N'$.

If $([u],[v])$ is an arc of $N/\sim$ then there exist $u' \in [u]$ and $v' \in [v]$ such that $(u',v')$ is an arc of $N$.   Since $f(u') \neq f(v')$ and $f$ is an  $X$-digraph map it follows $(f(u'), f(v'))$ is an arc of $N'$.   Conversely, suppose $(a,b)$ is an arc of $N'$.  Since $f$ is surjective there exist vertices $u$ and $v$ of $N$ such that $(u,v)$ is an arc of $N$, $f(u) = a$, and $f(v) = b$.  Since $a\neq b$ it follows $[u] \neq [v]$, so $([u],[v])$ is an arc of $N/\sim$ which satisfies that $\phi([u]) = a$ and $\phi([v]) = b$.  
\end{proof}

The connectedness of the inverse images of points implies the connectedness of the inverse images of more general connected sets:

\begin{thm} 
 Let $N = (V,A,r,X)$ and $N' = (V',A',r',X)$ be $X$-networks.  Let $f: N \to N'$ be a CSD map.  If $B \subseteq V'$ is connected in $N'$, then $f^{-1}(B)$ is connected in $N$.
\end{thm}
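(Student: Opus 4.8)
The plan is to proceed by induction on the number of vertices of $B$, using the connectedness of the fibers $f^{-1}(v')$ (guaranteed because $f$ is a CSD map) together with the surjectivity of $f$ on arcs. For the base case, $B = \{v'\}$ is a single vertex, and then $f^{-1}(B) = f^{-1}(v')$ is connected by the definition of a connected map. For the inductive step, suppose $|B| \geq 2$ and the result holds for all connected subsets of $V'$ with fewer vertices. Since $B$ is connected in $N'$, the induced undirected graph $Und(N')[B]$ is connected; choose a vertex $w' \in B$ which is not a cut vertex of this graph (every finite connected graph with at least two vertices has such a vertex, e.g. a leaf of a spanning tree), so that $B' := B \setminus \{w'\}$ is still connected in $N'$, and moreover $w'$ has a neighbor $z' \in B'$ in $Und(N')$, meaning $(w',z')$ or $(z',w')$ is an arc of $N'$.

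By the inductive hypothesis, $f^{-1}(B')$ is connected in $N$; by the definition of a connected map, $f^{-1}(w')$ is connected in $N$. Since $f^{-1}(B) = f^{-1}(B') \cup f^{-1}(w')$ is a disjoint union of these two connected sets, it suffices to exhibit a single edge of $Und(N)$ joining a vertex of $f^{-1}(w')$ to a vertex of $f^{-1}(B')$. This is exactly where surjectivity enters: the arc of $N'$ between $w'$ and $z'$ — say it is $(w',z')$; the other case is symmetric — is the image of some arc of $N$, so there exist $u,v \in V(N)$ with $(u,v) \in A$, $f(u) = w'$, and $f(v) = z'$. Then $u \in f^{-1}(w')$, $v \in f^{-1}(z') \subseteq f^{-1}(B')$, and $\{u,v\}$ is an edge of $Und(N)$ joining the two pieces. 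Hence $f^{-1}(B)$ is connected in $N$, completing the induction.

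The only point requiring a little care — and the step I would flag as the main obstacle — is ensuring at each stage that one can remove a vertex $w'$ from $B$ so that what remains is still connected \emph{and} $w'$ has a neighbor in the remainder; both are automatic if $w'$ is chosen to be a non-cut vertex of the finite connected graph $Und(N')[B]$ (equivalently, an endpoint of a spanning tree of that graph). Everything else is bookkeeping: the fibers being nonempty and connected comes directly from the hypothesis that $f$ is CSD, and the edge bridging the two fibers comes directly from the arc-surjectivity clause in the definition of a surjective $X$-digraph map. Note that directions of arcs play no role here, since connectedness of a subset of a digraph is defined via $Und$.
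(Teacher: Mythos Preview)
Your proof is correct and uses essentially the same ingredients as the paper's: connectedness of each fiber $f^{-1}(v')$ together with arc-surjectivity to bridge distinct fibers. The paper argues this directly by lifting a spanning set of arcs of $Und(N')[B]$ all at once, whereas you organize the same idea as an induction on $|B|$ that peels off one non-cut vertex at a time; the content is the same, with your version spelling out a step the paper leaves implicit.
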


\begin{proof}
Write $B = \{v_1', v_2', \cdots, v_k'\}$.    Then $f^{-1}(B)= \cup [f^{-1}(v_i'): i = 1, \cdots, k]$.  Since $B$ is connected, there exist arcs $(v_{a_i}', v_{b_i}')$ for $i = 1, \cdots, m$ such that  these arcs connect together the members of $B$.   Since $f$ is surjective, for each $i$ there exist vertices $v_{a_i} \in f^{-1}(v_{a_i}')$ and $v_{b_i} \in f^{-1}(v_{b_i}')$ such that $(v_{a_i}, v_{b_i})\in A$.  But now since each set $f^{-1}(v_i')$ is connected, it follows that $f^{-1}(B)$ is connected.
\end{proof}

\begin{thm}
Let $N$,  $N'$, and $N''$ be $X$-networks.  Let $f: N \to N'$ and $g: N' \to N''$ be $X$-digraph maps.\\
(a) The composition $g \circ f : N \to N''$ is an $X$-digraph map.\\
(b)  If $f$ and $g$ are surjective, then $g \circ f$ is surjective.\\
(c) If $f$ and $g$ are connected and surjective, then $g \circ f$ is connected and surjective.
\end{thm}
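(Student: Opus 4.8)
The plan is to verify each clause by unwinding definitions and invoking the earlier results of this section where they do real work. For (a) I would simply check the three defining conditions of an $X$-digraph map for $g\circ f$: conditions (a) and (b) are immediate from $f(r)=r'$, $g(r')=r''$ and $f(x)=g(x)=x$ for $x\in X$, while for condition (c) I would take an arc $(u,v)$ of $N$ and split into cases. Since $f$ is an $X$-digraph map, either $f(u)=f(v)$, in which case $(g\circ f)(u)=(g\circ f)(v)$, or $(f(u),f(v))$ is an arc of $N'$, and then applying the same condition to $g$ shows that $(g(f(u)),g(f(v)))$ is an arc of $N''$ or that these two vertices coincide. In every case the alternative required of $g\circ f$ holds.

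For (b) I would chase preimages. Given $v''\in V''$, surjectivity of $g$ gives some $v'$ with $g(v')=v''$, and then surjectivity of $f$ gives some $v$ with $f(v)=v'$, so $(g\circ f)^{-1}(v'')\neq\emptyset$. Given an arc $(a,b)$ of $N''$, surjectivity of $g$ produces an arc of $N'$ lying over it, and surjectivity of $f$ produces an arc of $N$ lying over that arc; this arc of $N$ then lies over $(a,b)$ under $g\circ f$. Hence $g\circ f$ is surjective.

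Part (c) is where the earlier machinery earns its keep. The surjectivity half is just (b). For connectedness I would use the key identity $(g\circ f)^{-1}(v'')=f^{-1}\bigl(g^{-1}(v'')\bigr)$, valid for each $v''\in V''$. Since $g$ is a CSD map, $B:=g^{-1}(v'')$ is a nonempty connected subset of $N'$; since $f$ is a CSD map, Theorem 3.3 applies to $B$ and yields that $f^{-1}(B)$ is connected in $N$. Thus every fibre of $g\circ f$ is connected, so $g\circ f$ is a CSD map.

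I do not expect a genuine obstacle here. The only step needing a little care is recognizing that the clean route for (c) runs through the identity $(g\circ f)^{-1}(v'')=f^{-1}(g^{-1}(v''))$ together with Theorem 3.3 --- which is precisely the statement that a CSD map pulls connected sets back to connected sets. A direct attempt to assemble the connectedness of $f^{-1}(g^{-1}(v''))$ out of the connectedness of the individual fibres $f^{-1}(v')$ would essentially reprove Theorem 3.3, and is best avoided.
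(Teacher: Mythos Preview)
Your proposal is correct and follows essentially the same approach as the paper: parts (a) and (b) are declared immediate there (you merely spell them out), and for (c) the paper uses exactly the identity $(g\circ f)^{-1}(v'')=f^{-1}(g^{-1}(v''))$ together with Theorem~3.3, just as you do.
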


\begin{proof}

 (a) and (b) are immediate.   For (c), suppose $f$ and $g$ are connected and surjective. From (b), $g \circ f$ is surjective.  For any vertex $v''$ of $N''$, $(g \circ f)^{-1}(v'') = f^{-1}(g^{-1}(v''))$.  Since $g$ is connected, $g^{-1}(v'')$ is connected.  But then by Theorem 3.3 since $f$ is connected, $f^{-1}(g^{-1}(v''))$ is connected.
\end{proof}

It follows that the composition of any number of CSD maps is also a CSD map.  The network which is the image of the last map is thus a quotient digraph of the first network.

We next show that in certain circumstances a CSD map $f$ can be factored as $f = h \circ g$, where $g$ and $h$ are CSD maps. 

Suppose $N=(V,A,r,X)$ is an $X$-network.  
A partition $\mathcal{Q}$ of $V$ is \emph{subordinate} to a partition $\mathcal{P}$ of $V$ provided, for each $A \in \mathcal{Q}$, there exists $B \in \mathcal{P}$ such that $A \subseteq B$.  

\begin{thm}
Let $N=(V,A,r,X)$ and $N'=(V',A',r',X)$ be $X$-networks.  
Let $f: N \to N'$ be a surjective $X$-digraph map with kernel $\mathcal{P} = \{f^{-1}(v): v \in V'\}$.   Suppose $\mathcal{Q}$ is a leaf-preserving partition of $V$ that is subordinate to $\mathcal{P}$.  \\
(1) There exist surjective $X$-digraph maps $g: N \to N/\mathcal{Q}$ and $h: N/\mathcal{Q} \to N'$ such that 
$f = h \circ g$.\\
(2) If in addition $f$ is connected and each member of $\mathcal{Q}$ is connected, then both $h$ and $g$ are connected.\\

\end{thm}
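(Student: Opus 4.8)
The plan is to let $g$ be the canonical quotient map $N \to N/\mathcal{Q}$ supplied by Theorem 3.1 and to let $h$ be the map induced on the quotient by the rule $h([u]_{\mathcal{Q}}) = f(u)$, where $[u]_{\mathcal{Q}}$ denotes the block of $\mathcal{Q}$ containing $u$. Then $f = h \circ g$ will hold by construction, and what is left is a routine check of the $X$-digraph axioms and of surjectivity, plus --- for the connectedness in part~(2) --- one structural remark dual to Theorem 3.3. Before Theorem 3.1 can be invoked, though, one must know that $N/\mathcal{Q}$ is genuinely an $X$-network, i.e.\ that the equivalence relation $\sim_{\mathcal{Q}}$ whose classes are the blocks of $\mathcal{Q}$ is leaf-preserving; this is where subordination is used, since the $\mathcal{Q}$-block of a leaf $x$ lies inside $f^{-1}(x)$ and one must verify it is closed under out-arcs, drawing on the leaf-preservingness of the kernel $\mathcal{P}$ established in Theorem 3.2. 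Granting this, Theorem 3.1 yields $g$ as a surjective $X$-digraph map, and, under the extra hypothesis of part~(2) that every block of $\mathcal{Q}$ is connected, Theorem 3.1(3) yields that $g$ is connected.

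Next I would check that $h([u]_{\mathcal{Q}}) = f(u)$ is well defined --- it is, precisely because $\mathcal{Q}$ refines $\mathcal{P}$, so any two representatives of a $\mathcal{Q}$-block have the same $f$-value --- and that $f = h \circ g$, which is then immediate. The axioms are verified as in the proof of Theorem 3.2: $h([r]_{\mathcal{Q}}) = f(r) = r'$; $h([x]_{\mathcal{Q}}) = f(x) = x$ for $x \in X$; and for an arc $([u]_{\mathcal{Q}},[v]_{\mathcal{Q}})$ of $N/\mathcal{Q}$, realised by an arc $(u',v')$ of $N$ with $u' \in [u]_{\mathcal{Q}}$ and $v' \in [v]_{\mathcal{Q}}$, the fact that $f$ is an $X$-digraph map forces $f(u') = f(v')$ or $(f(u'),f(v')) \in A'$, that is, $h([u]_{\mathcal{Q}}) = h([v]_{\mathcal{Q}})$ or $(h([u]_{\mathcal{Q}}), h([v]_{\mathcal{Q}})) \in A'$. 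For surjectivity: every $v' \in V'$ is $f(u)$ for some $u$, so $[u]_{\mathcal{Q}} \in h^{-1}(v')$; and for an arc $(a,b)$ of $N'$, surjectivity of $f$ gives an arc $(u,v)$ of $N$ with $f(u) = a$ and $f(v) = b$, whence $[u]_{\mathcal{Q}} \neq [v]_{\mathcal{Q}}$ (because $a \neq b$ and $\mathcal{Q}$ refines $\mathcal{P}$), so $([u]_{\mathcal{Q}},[v]_{\mathcal{Q}})$ is an arc of $N/\mathcal{Q}$ lying over $(a,b)$. This proves part~(1).

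For part~(2) only the connectedness of $h$ remains. Fix $v' \in V'$; then $h^{-1}(v') = \{[u]_{\mathcal{Q}} : f(u) = v'\} = g(f^{-1}(v'))$. Since $f$ is connected, $f^{-1}(v')$ is connected in $N$, so it suffices to observe that an $X$-digraph map sends connected sets to connected sets --- the forward companion of Theorem 3.3. This is short: for a walk $w_0, w_1, \dots, w_m$ in $Und(N)$ lying inside a connected set $W$, the digraph-map axiom makes each consecutive pair $g(w_i), g(w_{i+1})$ either equal or adjacent in $N/\mathcal{Q}$, so after deleting repetitions one obtains a walk in $Und(N/\mathcal{Q})[g(W)]$ joining the images of the endpoints. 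Applying this with $W = f^{-1}(v')$ shows $h^{-1}(v')$ is connected, so $h$ is connected.

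I expect the genuinely delicate point to be the very first one --- confirming that $\sim_{\mathcal{Q}}$ is leaf-preserving, so that $N/\mathcal{Q}$ is a bona fide $X$-network and the statement has content --- because subordination by itself only pins down the $f$-values of $\mathcal{Q}$-blocks, not a priori which arcs leave the block containing a leaf. Once that is in place, both parts amount to bookkeeping over Theorems 3.1--3.3, with the ``image of a connected set is connected'' observation doing the only real work in part~(2).
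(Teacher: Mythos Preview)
Your approach matches the paper's almost line for line: the paper too sets $g(v)=[v]_{\mathcal Q}$ and $h([v]_{\mathcal Q})=[v]_{\mathcal P}$ (which, under the identification of Theorem~3.2, is your $f(u)$), checks well-definedness via subordination, and for part~(2) argues that $h^{-1}(v')$ is the $g$-image of the connected set $f^{-1}(v')$ and hence connected. Your explicit check that digraph maps carry connected sets to connected sets is a detail the paper leaves implicit.

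Where you and the paper part ways is that you flag leaf-preservingness of $\sim_{\mathcal Q}$ as something to be verified, while the paper passes over it. Your instinct is right, and in fact the verification \emph{fails} in general: subordination to $\mathcal P$ does not force $\sim_{\mathcal Q}$ to be leaf-preserving. Take $X=\{x\}$, let $N$ have vertices $r,a,b,x$ with arcs $(r,a),(a,b),(a,x),(b,x)$, let $N'$ be $r'\to x$, and let $f$ send $r\mapsto r'$ and $a,b,x\mapsto x$. Then $\mathcal P=\{\{r\},\{a,b,x\}\}$, and $\mathcal Q=\{\{r\},\{a,x\},\{b\}\}$ is subordinate to $\mathcal P$ with every block connected; yet $a\in[x]_{\mathcal Q}$ has the out-arc $(a,b)$ with $b\notin[x]_{\mathcal Q}$, so $\sim_{\mathcal Q}$ is not leaf-preserving and $N/\mathcal Q$ is not an $X$-network (indeed it has no leaf). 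So the step you singled out is not merely delicate but actually unprovable from the stated hypotheses; the theorem needs an extra assumption such as ``$\sim_{\mathcal Q}$ is leaf-preserving''. This does not harm the paper's sole application (Theorem~5.3), since there $\mathcal Q$ is itself the kernel of a surjective $X$-digraph map and hence leaf-preserving by Theorem~3.2.
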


\begin{proof}
(1) Write $[v]_\mathcal{Q}$ for the member of $\mathcal{Q}$ that contains vertex $v$.  Define $g$ by $g(v)  = [v]_\mathcal{Q}$.   
Define $h$ by 
$h([v]_\mathcal{Q}) = [v]_\mathcal{P}$.  To see that $h$ is well-defined we must show that if $[v_1]_\mathcal{Q} = [v_2]_\mathcal{Q}$, then it follows that
$[v_1]_\mathcal{P} = [v_2]_\mathcal{P}$.
But if $[v_1]_\mathcal{Q} = [v_2]_\mathcal{Q}$, then $v_1$ and $v_2$ are in the same member of the partition, whence because $\mathcal{Q}$ is subordinate to $\mathcal{P}$ we have $[v_1]_\mathcal{P} = [v_2]_\mathcal{P}$.  Hence  $h$ is well-defined.   Moreover,  $(h \circ g)(v) = h(g(v))$ $ = h([v]_\mathcal{Q})$ $ = [v]_\mathcal{P} = f(v)$ using Theorem 3.1.

Since $\mathcal{Q}$ is leaf-preserving, the map $g$ is an $X$-digraph map by Theorem 3.1.  There remains to see that $h$ is an $X$-digraph map.   Note if $x\in X$ then $h([x]_\mathcal{Q}) = [x]_\mathcal{P}=f(x)$ is the leaf labelled $x$ in $N'$.  Likewise $h([r]_\mathcal{Q}) = [r]_\mathcal{P}=f(r)$ is the root of $N'$.   The condition on arcs is seen similarly.  Hence $h$ is an $X$-digraph map.

Since $f$ is surjective, for each $v' \in V'$ there exists $v \in V(N)$ such that $f(v) = v'$.  Hence $h([v]_\mathcal{Q}) = v'$ and $g(v) = ([v]_\mathcal{P})$ so $h$ and $g$ are surjective as maps of sets.  If $(u',v')$ is an arc of $N'$, then since $f$ is surjective there exist vertices $u$ and $v$ of $N$ such that $f(u) = u'$, $f(v) = v'$, and $(u,v)$ is an arc of $N$.  Hence $(g(u), g(v))$ is an arc of $N/\mathcal{Q}$ and $h(g(u)) = u'$, $h(g(v)) = v'$ in $N'$, so $h$ is surjective.  Moreover, $g$ is surjective by Theorem 3.1.

For (2) suppose $f$ is connected and each member of $\mathcal{Q}$ is connected.  Each vertex of $N/\mathcal{Q}$ is a subset $B$ of $V$ for $B \in \mathcal{Q}$.  By hypothesis $B$ is connected, so it follows that $g$ is connected.  Next suppose $v \in V'$; since $f$ is surjective, pick $w \in f^{-1}(v)$.  Then $h^{-1}(v)$ is the image in $N/\mathcal{Q}$ of $[w]_\mathcal{P}$.  But $[w]_\mathcal{P}$ is connected since $f$ was connected, so its image in $N/\mathcal{Q}$ is also connected.  Hence $h$ is connected.
\end{proof}

One theme of this paper is the transformation of networks.  Common operations to transform  graphs include:\\
(1) contraction of an arc or edge to a point; and  \\
(2) deletion of an arc or edge.  \\
For example, when a tree has a vertex $v$ of indegree 1 and outdegree 1, it is common to simplify the tree by contracting one of the edges involving $v$ to a point.  When a network $N$ is said to \emph{display} a tree $T$, the meaning is that certain arcs (directed into hybrid vertices) may be deleted  so that the resulting graph is a tree homeomorphic with $T$.  An example of contraction is shown in Figure 2.

In fact, we see below that CSD maps arise from recursive  contractions of arcs  without any deletion of arcs.  

More precisely, given a digraph $N = (V,A)$ with arc $(a,b)$, contraction of the arc to a point means forming a new network $N'$ as follows:\\
(i) Add a new vertex $a'$.\\
(ii) Remove each arc $(u,a)$ with $u \neq b$ and add an arc $(u,a')$.\\
(iii) Remove each arc $(a,u)$ for $u\neq b$ and add an arc $(a',u)$.\\
(iv) Remove each arc $(u,b)$ with $u \neq a$ and add an arc $(u,a')$.\\
(v) Remove each arc $(b,u)$ for $u\neq a$ and add an arc $(a',u)$.\\
(vi) Delete the vertices $a$ and $b$ and the arc $(a,b)$ as well as the arc $(b,a)$ if it existed originally.  

In steps (ii) through (v), note that if the new arc already exists (from some previous step) then we do not add an additional copy.

\begin{figure}[!htb]  
\begin{center}

\begin{picture}(120,110) (0,0)

\put(70,100) {\vector(1,-1){30}}
\put(100,70) {\vector(0,-1){30}}
\put(100,40) {\vector(-1,-1){30}}
\put(70,100) {\vector(0,-1){90}}
\put(68,10) {\vector(0,1){90}}
\put(105,70){$e$}
\put(105,40){$d$}
\put(75,10){$c$}
\put(75,100){$a'$}
\put(115,55){$N'$}

\put(10,100) {\vector(1,-1){30}}
\put(40,70) {\vector(0,-1){30}}
\put(40,40) {\vector(0,-1){30}}
\put(40,10) {\vector(-1,0){30}}
\put(10,100) {\vector(0,-1){90}}
\put(8,10) {\vector(0,1){90}}
\put(10,10){\vector(1,2){30}}
\put(10,100){\vector(1,-3){30}}
\put(15,100){$a$}
\put(45,70){$e$}
\put(45,40){$d$}
\put(45,10){$c$}
\put(10,0){$b$}
\put(-5,55){$N$}

\end{picture}

\caption{ Contraction of the arc $(a,b)$ in $N$ yields $N'$.}
\end{center}
\end{figure}
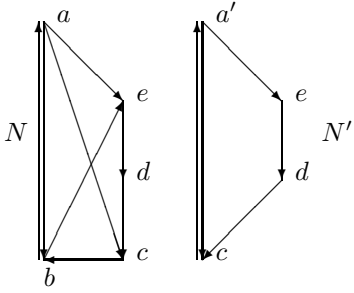

For example, in Figure 2, suppose $(a,b)$ in $N$ is contracted.  Both $(a,b)$ and $(b,a)$ are removed since there can be no loops.  There are new arcs $(a',e)$ because of $(a,e)$, $(a',c)$ because of $(a,c)$, and $(c,a')$ because of $(c,b)$.  There is no additional arc $(a',e)$ because of $(b,e)$.

Suppose $(u,v)$ is an arc of $N$ and $N'$ is obtained as above.  If neither $u$ nor $v$ is $a$ or $b$, then $(u,v)$ is an arc of $N'$, called the \emph{image} of $(u,v)$.  In case (ii) above we call $(u,a')$ the \emph{image} of $(u,a)$.  Similarly in cases (iii), (iv), and (v) the new arc is called the \emph{image} of the corresponding removed arc.   The arc $(a,b)$ has no image.

Suppose $B$ is a subset of $A$.  By \emph{successive contraction} of $B$ we mean:\\
(1) Select the arcs of $B$ in some order, say $(u_1,v_1), \cdots, (u_k,v_k)$.\\
(2) Recursively define networks $N_0, \cdots,  N_k$ and subsets $B_0, \cdots, B_k$ of $A(N_0), \cdots, A(N_k)$ as follows:\\
(a) $N_0 = N$, $B_0 = B$.\\
(b) Given $N_i$ with the collection of arcs $B_i$, let $N_{i+1}$ be the result of contracting the arc in $B_i$ which is the image of $(u_{i+1},v_{i+1})$ if such exists.   Moreover, $B_{i+1}$ is the image of $B_i$ in $N_{i+1}$.  (If there is no such arc, then $N_{i+1} = N_i$ and $B_{i+1} = B_i$.)

Thus $N_1$ is obtained by contracting $(u_1,v_1)$.   Then $N_2$ is obtained by contracting in $N_1$ the image of $(u_2,v_2)$, and so forth. 

\begin{thm} 
Suppose $N = (V,A,r,X)$ and $N' = (V',A',r', X)$ are $X$-networks and $f: N \to N'$ is a CSD map.  Then there is a collection $B \subseteq A$ of arcs of $N$ such that successive contraction of $B$ yields an $X$-network $N''$ which is $X$-isomorphic with $N'$.  
\end{thm}

\begin{proof} 
For each $v' \in V'$  let $W_{v'} =  f^{-1}(v')$.  
Then $W_{v'}$ is a connected  set of vertices in $N.$  Let $B_{v'}  = \{(a,b) \in A: a \in W_{v'}$ and $b \in W_{v'}\}$.  Thus $B_{v'}$ is the set of arcs of $N$ both of whose vertices lie in $W_{v'}$.  
Let $B  = \cup [ B_{v'}: v' \in V'] \subseteq A$.  

Let $\mathcal{P}$ denote the partition $\{W_{v'}\}$ of $V$, which is the kernel of $f$.  It is clear that for each $v'$,  successive contraction of the $B_{v'}$ contracts all vertices in $W_{v'}$ to a point.  Then $N'' = N/\mathcal{P}$.  By Theorem 3.5 there are CSD maps $g: N \to N/\mathcal{P}$ and $h: N/\mathcal{P} \to N'$ such that $f = h \circ g$.  The construction of Theorem 3.5 shows that $h$ is one-to-one and onto, and one verifies that $N/\mathcal{P} = N''$ is isomorphic with $N'$.
\end{proof} 

Conversely, if $N = (V,A,r,X)$ is an $X$-network and $B \subseteq A$, one may obtain a digraph $N'$ by successive contraction of $B$.  The network $N'$, however, does not need to be an $X$-network.  In order to obtain an $X$-network and a CSD map $f: N \to N'$, the arcs in $B$ must be chosen such that no two distinct leaves are ever identified, every member of $X$ remains a leaf,  and $N'$ remains rooted.

\section{Wired lifts} 

The next result, Theorem 4.1, shows that when $f: N \to N'$ is a CSD map, then in a certain sense the network $N'$ can ``almost" be identified as a subgraph in $N$.  In fact, there is a ``wired lift'' $M$ of $N'$ into $N$ consisting of an undirected subgraph $M$ of $N$ which resolves $N'$.   Typically there are numerous such wired lifts, at least one for any of a certain collection of arbitrary choices.  

More explicitly, let $G' = (V',E')$ be an (undirected) graph with leaf set $X$.  A graph $G = (V,E)$ with leaf set $X$ is a \emph{resolution} of $G'$ provided that $G'$ is obtained from $G$ by recursively contracting certain edges.  In each step, an edge $\{u,v\}$ of $G$ is contracted by removing the edge and identifying the two endpoints together.  No edge with an endpoint in $X$ is allowed to be contracted.  

Every graph is a resolution of itself.  

Let $N = (V,A,r,X)$ and $N' = (V',A',r',X)$ be $X$-networks.  Suppose $f: N \to N'$ is a surjective $X$-digraph map.  A \emph{wired lift} of $N'$ is an undirected subgraph $M =(W,E)$ of $Und(N)$ such that the following hold:\\
(1) For each arc $(u',v')$ of $N'$ there is exactly one arc $(u,v)$ of $N$ such the following three conditions hold: $f(u) = u'$, $f(v) = v'$, and $\{u,v\}$ is an edge of $M$.   The set of all edges $\{u,v\}$ so obtained will be denoted $E_1$ and the set of all vertices which occur in any of the edges $\{u,v\}\in E_1$ will be denoted $V_1'$.  Let $V_1 = V_1' \cup X$.\\
(2) Every edge $\{a,b\} \in E$ either lies in $E_1$ or else satisfies $f(a) = f(b)$.\\
(3)  For each vertex $v'$ of $N'$,  let $V(v') = \{w \in V_1: f(w) = v'\}$.  The induced subgraph $M[  f^{-1}(v') \cap W]$ is a tree with leafset contained in $V(v')$. 
 
We call $E_1$ the set of \emph{nondegenerate} edges of $M$, since the image under $f$ of each such edge is an edge of $N'$, not just a single vertex.  Note that $W \subseteq V$ and $E\subseteq E(Und(N))$.

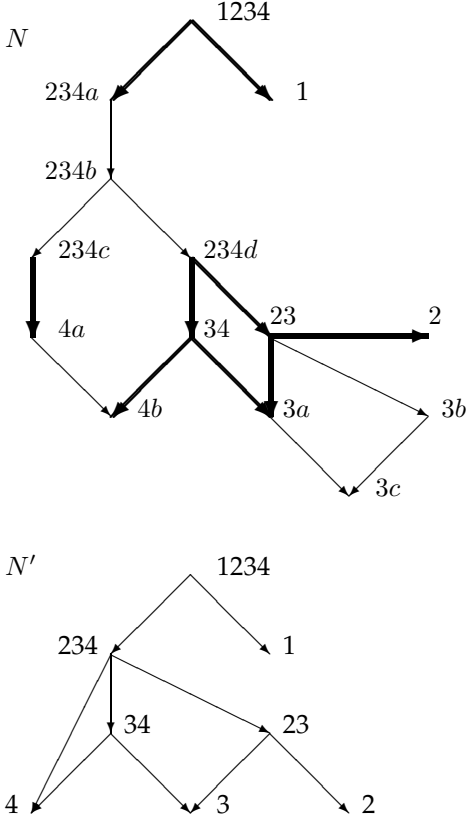
\begin{figure}[!htb]  
\begin{center}

\begin{picture}(170,320) (0,0)
\put(70,100){\vector(1,-1){30}}
\put(70,100){\vector(-1,-1){30}}
\put(40,70){\vector(0,-1){30}}
\put(40,70){\vector(2,-1){60}}
\put(40,70){\vector(-1,-2){30}}
\put(40,40){\vector(-1,-1){30}}
\put(40,40){\vector(1,-1){30}}
\put(100,40){\vector(-1,-1){30}}
\put(100,40){\vector(1,-1){30}}

\put(0,10) {4}
\put(80,10) {3}
\put(135,10) {2}
\put(45,40) {34}
\put(105,40) {23}
\put(20,70) {234}
\put(105,70) {1}
\put(80,100) {1234}
\put(0,100){$N'$}

\put(40,280){\vector(0,-1){30}}
\put(40,250){\vector(-1,-1){30}}
\put(40,250){\vector(1,-1){30}}
\put(10,190){\vector(1,-1){30}}
\put(100,190){\vector(2,-1){60}}
\put(100,160){\vector(1,-1){30}}
\put(160,160){\vector(-1,-1){30}}

\thicklines

\put(70,310){\vector(1,-1){30}}
\put(70,310){\vector(-1,-1){30}}
\put(10,220){\vector(0,-1){30}}
\put(70,220){\vector(0,-1){30}}
\put(70,220){\vector(1,-1){30}}
\put(70,190){\vector(-1,-1){30}}
\put(70,190){\vector(1,-1){30}}
\put(100,190){\vector(1,0){60}}

\put(71,310){\vector(1,-1){30}}    
\put(71,310){\vector(-1,-1){30}}
\put(11,220){\vector(0,-1){30}}
\put(71,220){\vector(0,-1){30}}
\put(71,220){\vector(1,-1){30}}
\put(71,190){\vector(-1,-1){30}}
\put(71,190){\vector(1,-1){30}}
\put(100,191){\vector(1,0){60}}
\put(101,190){\vector(0,-1){30}}
\put(100,190){\vector(0,-1){30}}

\thinlines
\put(80,310){1234}
\put(110,280){1}
\put(15,280){$234a$}
\put(15, 250){$234b$}
\put(20,220){$234c$}
\put(75,220){$234d$}
\put(20,190){$4a$}
\put(50,160){$4b$}
\put(75,190){$34$}
\put(100,195){$23$}
\put(160,195){$2$}
\put(105,160){$3a$}
\put(165,160){$3b$}
\put(140,130){$3c$}
\put(0,300){$N$}

\end{picture}

\caption{ There is a CSD map $f$ from $N$ to $N'$ given by  labeling the vertices of $N$ by vertices of $N'$. The bold arcs lie in $E_1$ and  correspond to arcs in $N'$.  A wired lift $M$ of $N'$ includes all of $Und(N)$ except the vertex $3b$ and the edges $\{23,3b\}$ and $\{3b,3c\}$.   }
\end{center}
\end{figure}

An example illustrating all these definitions is shown in Figure 3.  Figure 3 shows  a CSD map $f : N \to N'$ given by labeling the vertices of $N$ by vertices of $N'$ sometimes together with an additional letter.  For example, $f(234a)$ is the vertex of $N'$ labelled $234$.    One wired lift $M=(W,E)$ of $N'$ satisfies that $W$ consists of all vertices of $N$ except $3b$ while $E$ consists of all edges of $Und(N)$ except $\{23,3b\}$ and $\{3b,3c\}$.  Here $E_1$ corresponds to the nine edges of $Und(N)$ in bold, one for each arc of $N'$.  More precisely, $E_1$ contains $\{1234,1\}$, $\{1234,234a\}$, $\{234c,4a\}$, $\{234d,23\}$, $\{234d,34\}$, $\{23,2\}$, $\{23,3a\}$, $\{34,4b\}$ and $\{34,3a\}$.  Then $V_1'$ consists of all the vertices on any bold arc hence equals  $\{1, 1234, 234a, 234c, 4a, 234d, 34, 4b, 23, 3a, 2\}$.  Since the leaf $3c$ of $N$ is not in $V_1'$, we have $V_1 = V_1' \cup X$ $= V_1' \cup \{3c\}$.  Note that $f^{-1}(234)$ is a tree with vertex set $\{234a, 234b, 234c, 234d\}$ and $f^{-1}(4)$ is a tree with vertex set $\{4a, 4b\}$.  Both trees are included in $M$.  Finally $f^{-1}(3)$ is a tree with vertex set $\{3a, 3b, 3c\}$ but $M[f^{-1}(3) \cap W] = M[\{3a,3c\}]$ consists only of the edge $\{3a,3c\}$ and its vertices.

In (1) observe that if $x \in X$ and $(u',x)$ is an arc of $N'$, then there may be no arc $(u,x)$ of $N$ such that $f(u) = u'$.  The addition of $X$ to $V_1'$ may therefore be necessary.  In Figure 3 this occurred with the leaf $3\in X$, identified as vertex 3c of $N$.

Intuitively, $M$ is a subgraph of $Und(N)$ that is a resolution of $Und(N')$ in that for each vertex $v'$ of $N'$, $[f^{-1}(v')] \cap W$ consists of the vertices of a tree, all of whose vertices map to $v'$, not necessarily a single point.  The name ``lift'' suggests that $N'$ is being lifted into the domain of $f$.  Typically, not every vertex of $N$ will lie in $M$.

The following theorem gives sufficient conditions for a wired lift to exist given any choice of $E_1$.  The essential property is that $f$ be connected.  In order to have the possibility of always extending $E_1$ to a wired lift, the inverse image of each vertex of $N'$ must be connected.

\begin{thm}
Let $N = (V,A,r,X)$ and $N' = (V',A',r',X)$ be $X$-networks.  Suppose $f: N \to N'$ is a CSD map.  For each arc $(u',v')$ of $N'$ choose an arc $(u,v)$ of $N$ such that $f(u) = u'$, $f(v) = v'$. Let $E_1$ denote the set of edges $\{u,v\}$ of $Und(N)$ so obtained.  Then $f$ has a wired lift $M$ for which $E_1$ is the set of nondegenerate edges.  Each such wired lift $M$ is a resolution of $Und(N')$.  
\end{thm}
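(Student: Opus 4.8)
The plan is to build the wired lift $M$ one fibre at a time. The data $E_1$, $V_1'$ and $V_1 = V_1' \cup X$ are already determined by the given choices, and by condition (2) the only further edges allowed in $M$ are \emph{degenerate} ones, whose two endpoints lie in a common fibre $f^{-1}(v')$. So the whole construction reduces to choosing, inside each fibre, a suitable set of degenerate edges, and the single fact that makes this possible is that $f$ is a CSD map, so that each $f^{-1}(v')$ is connected in $N$.

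First I would set up the bookkeeping. Writing $F_{v'} = f^{-1}(v')$, the fibres are pairwise disjoint and cover $V$; each edge of $E_1$ joins two distinct fibres, since it lies over an arc $(u',v')$ of $N'$ with $u' \neq v'$ and so $f$ takes different values on its ends; and $V(v') := \{w \in V_1 : f(w)=v'\}$ is a subset of $F_{v'}$ with $\bigcup_{v'} V(v') = V_1$. Then, for each vertex $v'$ of $N'$, I would produce a fibre tree $T_{v'}$ inside $Und(N)[F_{v'}]$ as follows: take any spanning tree of the connected graph $Und(N)[F_{v'}]$ and repeatedly delete a leaf that is not in $V(v')$. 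This terminates, never removes a vertex of $V(v')$, and leaves a subtree $T_{v'}$ of $Und(N)[F_{v'}]$ with $V(v') \subseteq V(T_{v'})$ and with every leaf of $T_{v'}$ lying in $V(v')$; this is the sense in which $T_{v'}$ is ``a tree with leafset $V(v')$''. Now set $W = \bigcup_{v'} V(T_{v'})$ and $E = E_1 \cup \bigcup_{v'} E(T_{v'})$, and put $M = (W,E) \subseteq Und(N)$.

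Checking the three defining conditions is then routine. Condition (1) is just the choice defining $E_1$, once one notes that an edge of some $E(T_{w'})$ cannot also lie over an arc of $N'$ since it is degenerate. Condition (2) holds because every edge of $E \setminus E_1$ lies in some $E(T_{w'})$ and so has both endpoints in $F_{w'}$. For condition (3), disjointness of the fibres gives $F_{v'} \cap W = V(T_{v'})$, and the only edges of $M$ with both endpoints in that set are those of $E(T_{v'})$ (an $E_1$-edge meets two different fibres, and an $E(T_{w'})$-edge with $w' \neq v'$ lies in a disjoint fibre), so $M[F_{v'}\cap W] = T_{v'}$, which was built to be a tree with leafset $V(v')$. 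For the final assertion, contract in $M$ every degenerate edge, that is, every edge of every $T_{v'}$: contracting the connected tree $T_{v'}$ is a sequence of edge contractions and collapses $V(T_{v'})$ to a single vertex, which I identify with $v'$; the surviving edges are exactly $E_1$, and the unique edge over an arc $(u',v')$ of $N'$ becomes the edge $\{u',v'\}$, so the contracted graph is $Und(N')$ and $M$ is a resolution of it. The one point deserving a word of care here is that no contracted edge may have an endpoint in $X$; this is handled by observing that a fibre $F_x$ over a leaf $x$ contributes no degenerate edge incident to $x$, which one arranges either because $T_x$ is trivial or because the choices can be taken so that $V(x)=\{x\}$.

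The main obstacle is precisely the fibre-tree step, and it is exactly where connectedness of $f$ is essential: if $f$ were merely surjective, some fibre $F_{v'}$ could be disconnected, and then no subgraph of $Und(N)[F_{v'}]$ is a tree containing $V(v')$, so for most prescribed edge sets $E_1$ no wired lift would exist at all — this is the contrast made precise in Theorem 4.5. Once the fibres are known to be connected the trees $T_{v'}$ exist, and everything else is the bookkeeping above, together with the observation that undoing the degenerate edges by contraction returns $Und(N')$.
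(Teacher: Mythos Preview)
Your proof is correct and follows essentially the same route as the paper: for each fibre $f^{-1}(v')$, take a spanning tree of the connected graph $Und(N)[f^{-1}(v')]$, prune leaves not in $V(v')$ to obtain $T_{v'}$, and let $M$ be the union of the $T_{v'}$ together with $E_1$; then verify the three wired-lift conditions and obtain $Und(N')$ by contracting all degenerate edges. Your verification of condition~(3) and your remark about the leaf-contraction constraint in the definition of resolution are in fact more careful than the paper's own proof, which checks the conditions briefly and simply asserts that contracting $E_2$ yields $Und(N')$ without addressing the endpoint-in-$X$ restriction.
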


\begin{proof}
We may assume that $N'$ does not consist of a single vertex, so every vertex of $N'$ is an endpoint of some arc of $N'$.  Since $f$ is surjective, the construction of $E_1$ in the statement can be carried out.  Let $V_1$ be the set of all vertices of $N$ that arise as an endpoint of some edge in $E_1$ or else lie in $X$.   

For each vertex $v'$ of $N'$, the set $V(v') = \{w \in V_1: f(w) = v'\}$ is a subset of $f^{-1}(v')$.  Note that $V(v')$ is nonempty since each vertex occurs in some arc.  Since $f$ is connected, the graph $N_{v'} := Und(N)[f^{-1}(v')]$ is connected.   Consequently there exists a subtree $T_{v'}$ of $N_{v'}$ that contains $V(v')$, for example a minimal spanning tree.  We may assume that $T_{v'}$ has no leaves that are not members of $V(v')$ by removing other leaves.   Let $V_2$ denote the set of all vertices that lie on any $T_{v'}$, and let $E_2$ denote the set of all edges $\{u, v\}$ that lie in  $T_{v'}$ for some $v'$.  

Define the graph $M = (V_M, E_M)$  by $V_M := V_1 \cup V_2$ and $E_M := E_1 \cup E_2$.     

I claim $M$ is a wired lift.  Each edge $\{u,v\}$ in $E_2$ is contained in $V(v')$ for some $v'$ and satisfies $f(u) = f(v) = v'$.  Each edge $\{u,v\}$ in $E_1$ is such that either $(f(u),f(v))$ or $(f(v),f(u))$ is an arc of $N'$.  This shows that $M$ satisfies properties (1) and (2) of wired lifts.  Property (3) is immediate  since $T(v')$ is a tree.  

Finally, $M$ is a resolution of $Und(N')$ since, to obtain $Und(N')$ from $M$, one must merely contract every edge in $E_2$.  
\end{proof}

Observe that in the wired lift, the edges $E_1$ are in one-to-one correspondence with the edges of $Und(N')$.  All additional edges, {\it i.e.,} those in $E_2$, are such that both endpoints map under $f$ to the same vertex of $N'$.  Many different vertices of $M$ can project to the same vertex $v'$ in $N'$, but all those that do so lie on the tree $T_{v'}$.  

Even though $M$ is an undirected graph, each of the edges $\{u,v\}\in E_1$  may be considered to have a preferred orientation of either $(u,v)$ or $(v,u)$ depending on which is an arc of $N$.

\begin{figure}[!htb]  
\begin{center}

\begin{picture}(140,430) (0,0)
\put(100,420){\vector(1,-1){30}}
\put(100,420){\vector(-1,-1){30}}
\put(70,390){\vector(1,-1){30}}
\put(70,390){\vector(-1,-1){30}}
\put(40,360){\vector(-1,-1){30}}
\put(40,360){\vector(0,-1){30}}
\put(100,360){\vector(-1,-1){30}}
\put(100,360){\vector(1,-1){30}}
\put(100,360){\vector(1,0){30}}
\put(70,330){\vector(-1,-1){30}}
\put(70,330){\vector(0,-1){30}}
\put(130,330){\vector(1,-1){30}}
\put(130,330){\vector(0,-1){30}}
\put(70,300){\vector(-1,-1){30}}
\put(70,300){\vector(1,-1){30}}
\put(130,300){\vector(-1,-1){30}}
\put(130,300){\vector(1,-1){30}}
\put(100,270){\vector(0,-1){30}}
\put(100,240){\vector(1,-1){30}}
\put(100,240){\vector(-1,-1){30}}
\put(110,430){11}
\put(140,390){1}
\put(55,390){20}
\put(105,365){12}
\put(140,360){2}
\put(50,360){19}
\put(-1,330){10}
\put(50,330){9}
\put(80,330){18}
\put(140,330){13}
\put(30,300){8}
\put(80,300){17}
\put(115,300){14}
\put(170,300){3}
\put(110,270){15}
\put(170,270){4}
\put(50,270){7}
\put(110,240){16}
\put(60,210){6}
\put(140,210){5}
\put(30,390){$N$}

\put(100,190){\vector(1,-1){30}}
\put(100,190){\vector(-1,-1){30}}
\put(70,160){\vector(1,-1){30}}
\put(70,160){\vector(-1,-1){30}}
\put(40,130){\vector(-1,-1){30}}
\put(40,130){\vector(0,-1){30}}
\put(100,130){\vector(1,0){30}}
\put(100,130){\vector(0,-1){60}}
\put(100,70){\vector(-1,0){30}}
\put(100,70){\vector(-1,-1){30}}
\put(100,70){\vector(0,-1){30}}
\put(100,70){\vector(1,-1){30}}
\put(100,70){\vector(1,0){30}}
\put(100,40){\vector(-1,-1){30}}
\put(100,40){\vector(1,-1){30}}

\put(110,190){11}
\put(140,160){1}
\put(50,160){20}
\put(20,130){19}
\put(80,130){12}
\put(140,130){2}
\put(-1,100){10}
\put(50,100){9}
\put(60,70){8}
\put(105,80){[13]}
\put(140,70){3}
\put(60,40){7}
\put(140,40){4}
\put(105,40){16}
\put(60,10){6}
\put(140,10){5}

\put(25,160){$N'$}

\end{picture}

\caption{  Two $X$-networks $N$ and $N'$.  There is a CSD map from $N$ to $N'$.  
A wired lift $M$ consists of all edges of $Und(N)$ except $\{12,18\}$. }
\end{center}
\end{figure}
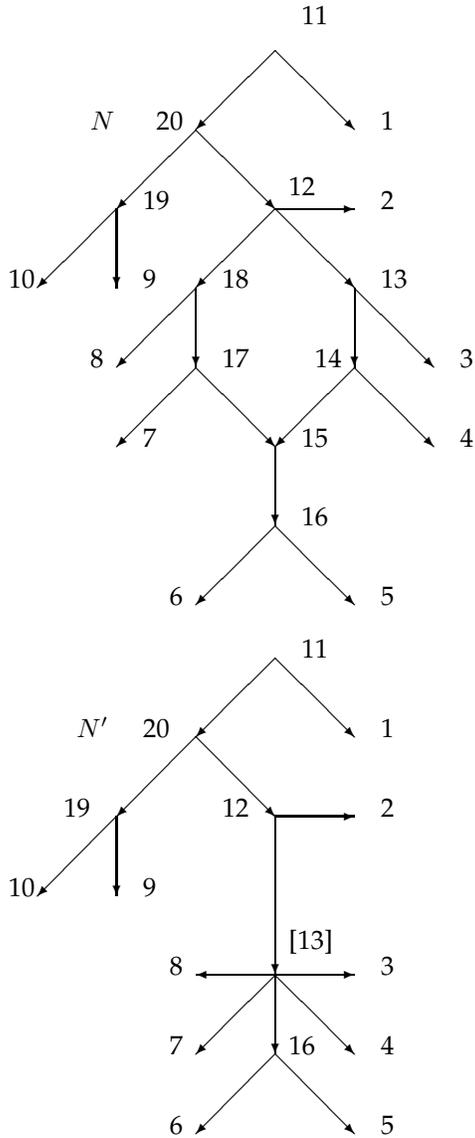

For another example, consider the networks $N$ and $N'$ in Figure 4
with $X = \{1, 2, 3, 4, 5, 6, 7, 8, 9, 10\}$.  
There is a CSD map $f: N \to N'$ given by
$f(x) = x$ for $x \in X$, $f(u) = u$ for $u \in \{11,12,16,19,20\}$, and $f(u) = [13]$ for $u \in \{13,14,15,17,18\}$.  A wired lift $M$ consists of all edges of $Und(N)$ except $\{12,18\}$.  Note that in $M$, 18 has no incoming directed arc from the directed graph $N$, but this is not a problem since the wired lift $M$ is an undirected graph.  
There is also a different wired lift, consisting of all edges of $Und(N)$ except $\{12,13\}$. 

The next few results show that a CSD map $\phi: N \to N'$ can put strong constraints on the structure of $N$.  

\begin{cor}
Let $N = (V,A,r,X)$ and $N' = (V',A',r',X)$ be $X$-networks and let $\phi: N \to N'$ be a CSD map.
Let $U'$ be an (undirected) subgraph of $Und(N')$ such that no vertex has total degree in $Und(U')$ greater than 3.  Then $Und(N)$ contains a subgraph $U$ homeomorphic with $U'$.
\end{cor}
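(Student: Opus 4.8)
The plan is to produce the subgraph $U$ from a wired lift of $N'$ and then recognize it as a subdivision of $U'$. First, apply Theorem 4.1 to the CSD map $\phi$, with an arbitrary choice of lifting arcs, to obtain a wired lift $M=(W,E)$ of $N'$ inside $Und(N)$; by property (1) of a wired lift the nondegenerate edge set $E_1$ is in bijection with the arcs of $N'$, and by property (3), for each $v'\in V'$ the set $f^{-1}(v')\cap W$ induces a tree $T_{v'}$ in $M$. For each edge of $U'$, say with endpoints $u'$ and $v'$, at least one of $(u',v'),(v',u')$ is an arc of $N'$; fix such an arc and let $\varepsilon(\{u',v'\})\in E_1$ be the unique edge of $M$ lying over it. Let $E_1^U$ be the set of all these edges; distinct edges of $U'$ yield distinct members of $E_1^U$, since the endpoints of $\varepsilon(\{u',v'\})$ map under $f$ to $u'$ and $v'$.

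Next, blow up the vertices. For each vertex $v'$ of $U'$ let $P_{v'}\subseteq f^{-1}(v')\cap W$ be the set of those endpoints of edges of $E_1^U$ that lie in the fiber over $v'$ (if $v'$ is isolated in $U'$, take $P_{v'}$ to be a single vertex of the nonempty fiber $f^{-1}(v')$), and let $S_{v'}$ be the minimal subtree of $T_{v'}$ spanning $P_{v'}$. Define $U$ to be the subgraph of $Und(N)$ with edge set $E_1^U\cup\bigcup_{v'\in V(U')}E(S_{v'})$. Since fibers over distinct vertices of $N'$ are disjoint, the trees $S_{v'}$ are pairwise vertex-disjoint and are joined precisely by the inter-fiber edges of $E_1^U$; in other words, $U$ is exactly $U'$ with each vertex $v'$ replaced by the tree $S_{v'}$ and each edge kept as its chosen lift. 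In particular $U\subseteq Und(N)$.

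Finally, verify that $U$ is homeomorphic to $U'$; this is where the degree hypothesis enters. Because $|P_{v'}|\le\deg_{U'}(v')\le 3$, the minimal subtree $S_{v'}$ is a single vertex, a path, or a tripod (one interior vertex of degree $3$). A short case analysis on $\deg_{U'}(v')\in\{0,1,2,3\}$ and on $|P_{v'}|$ then shows that, inside $U$, every vertex of $S_{v'}$ has total degree $2$ except for at most one vertex, whose degree is exactly $\deg_{U'}(v')$ (there is no exceptional vertex precisely when $\deg_{U'}(v')=2$, which is harmless since $v'$ is then already a degree-$2$ vertex of $U'$). Consequently $U$ can be obtained from $U'$ by a sequence of edge subdivisions (insertions of vertices of total degree $2$), namely by expanding each $v'$ into its path-or-tripod $S_{v'}$ and then subdividing the attached lifted edges appropriately, so $U$ and $U'$ are homeomorphic.

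I expect the last paragraph to be the only real obstacle: one must make precise that a path-or-tripod fiber tree, once its leaves are fused with the lifted edges attached to them, contributes exactly one branch vertex of the correct degree and otherwise only suppressible degree-$2$ vertices. The bound $\deg_{U'}(v')\le 3$ is exactly what forces this — a fiber vertex of degree $4$ could require an ``H''-shaped fiber tree with two degree-$3$ branch vertices, which is not homeomorphic to the star $K_{1,4}$ — so the hypothesis cannot be dropped. Everything preceding that step (the bijection between $E_1$ and the arcs of $N'$, the disjointness of fibers, and the inclusion $U\subseteq Und(N)$) is routine bookkeeping from Theorem 4.1 and the definition of a wired lift.
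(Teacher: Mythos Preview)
Your proof is correct and follows essentially the same approach as the paper's: lift each edge of $U'$ via the wired lift from Theorem~4.1, connect the lifted endpoints within each fiber by a small tree, and use the bound $\deg_{U'}(v')\le 3$ to ensure each fiber tree is a vertex, path, or tripod so that the result is a subdivision of $U'$. Your version is in fact more explicit than the paper's---you take the minimal subtree of the wired-lift fiber tree rather than an arbitrary tree that is then pruned, and you spell out the degree-by-degree case analysis that the paper summarizes as ``a simple consideration of cases''---but the underlying argument is identical.
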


\begin{proof}
Let $M$ be a wired lift of $N'$ into $N$. For each vertex $u'$ of $U'$, there are at most three edges of $U'$ with $u'$ as one endpoint.  If there are $k$ such edges, $k\leq 3$, then denote them $\{a_1',u'\}$, $\cdots$, $\{a_k',u'\}$.  Since $\phi$ is surjective, there are $k$ edges $\{a_i,u_i\}$  in $N$ for $i = 1, \cdots, k$, with $\phi(a_i) = a_i'$ and $\phi(u_i) = u'$.  Since $\phi^{-1}(u')$ is connected, there is a tree $T_{u'}$  in $\phi^{-1}(u')$ with vertex set containing $u_1, \cdots, u_k$ and with endpoints contained in $\{u_1, \cdots, u_k\}$.  Since $k\leq 3$, we may modify $T_{u'}$ if necessary so that no vertex has total degree in $T_{u'}$ greater than 3.  Thus $Und(N)$ contains a subgraph $U$ consisting of one edge for each edge of $U'$  together with a tree $T_{u'}$ for each vertex $u'$ of $U'$.  A simple consideration of cases shows that $U$ is homeomorphic with $U'$.  
\end{proof}

If $U'$ has a vertex $u'$ of total degree 4, then the corresponding tree $T_u'$  may  contain a vertex of total degree 4 but might instead contain only vertices of total degree 3, in which case there is no homeomorphism between $U$ and $U'$.  Effectively, it is possible that $U$ closely resembles $U'$ but resolves some vertices in $U'$ of total degree greater than 3.  

If $\{a,b,c,d\} \subseteq X$, the \emph{quartet} $ab|cd$ is the undirected tree with leaf set $\{a,b,c,d\}$ in which $a$ and $b$ share a neighbor and also $c$ and $d$ share a neighbor.  

\begin{cor}
Let $N = (V,A,r,X)$ and $N' = (V',A',r',X)$ be $X$-networks and let $\phi: N \to N'$ be a CSD map.
If $Und(N')$ contains a subgraph homeomorphic with the quartet $ab|cd$, then so does $Und(N)$.
\end{cor}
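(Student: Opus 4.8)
The plan is to obtain this as an immediate consequence of the preceding corollary, the only new observation being that the quartet is a graph of bounded degree. First I would note that the quartet $ab|cd$ is a tree whose four leaves have total degree $1$ and whose two internal vertices have total degree $3$, so no vertex of $ab|cd$ has total degree exceeding $3$. This bound is preserved under homeomorphy: passing between homeomorphic graphs only inserts or suppresses vertices of total degree $2$, and neither operation changes the total degree of any other vertex (inserting a degree-$2$ vertex leaves all old degrees alone; suppressing a degree-$2$ vertex replaces its two incident edges by a single edge joining its two neighbours, again leaving their degrees alone). Hence the subgraph $U'$ of $Und(N')$ that is assumed homeomorphic with $ab|cd$ also has no vertex of total degree greater than $3$.

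Next I would apply the preceding corollary to this $U'$: since $\phi:N\to N'$ is a CSD map and $U'\subseteq Und(N')$ has maximum total degree at most $3$, that corollary furnishes a subgraph $U\subseteq Und(N)$ homeomorphic with $U'$. Finally, homeomorphy of graphs is transitive: splicing together the sequence of insertions and suppressions carrying $U$ to $U'$ and the one carrying $U'$ to $ab|cd$ carries $U$ to $ab|cd$. Hence $U$ is a subgraph of $Und(N)$ homeomorphic with the quartet $ab|cd$, as required.

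There is no real obstacle; the proof is short. The one step worth stating explicitly is the degree-preservation claim above, which is exactly what confines the argument to subgraphs of maximum total degree $3$ (and is precisely why an analogue for, say, a vertex of total degree $4$ in $U'$ can fail, as noted in the remark following the preceding corollary). If in addition one wants the distinguished leaves $a,b,c,d$ themselves — rather than merely vertices of $N$ mapping onto them under $\phi$ — to be the degree-one vertices of $U$, one observes that in the construction behind the preceding corollary the tree chosen inside $\phi^{-1}(x)$ for $x\in\{a,b,c,d\}$ may be taken to be a path ending at the leaf $x$, which is possible because $\phi$ fixes $X$ pointwise and each $\phi^{-1}(x)$ is connected.
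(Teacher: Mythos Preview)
Your proposal is correct and matches the paper's intent: the paper gives no explicit proof of this corollary, treating it as an immediate consequence of Corollary~4.2, and your argument spells out precisely the two observations needed---that any graph homeomorphic to the quartet has maximum total degree at most $3$, and that homeomorphy is transitive. The additional paragraph about arranging the actual leaves $a,b,c,d\in X$ to be the endpoints of $U$ goes slightly beyond what the corollary literally asserts, but it is a correct and worthwhile refinement.
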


\begin{figure}[!htb]  
\begin{center}

\begin{picture}(140,200) (0,0)

\put(70,190) {\vector(1,-1){30}}
\put(70,190) {\vector(-1,-1){30}}
\put(40,160) {\vector(1,-1){30}}
\put(40,160) {\vector(-1,-1){30}}
\put(100,160) {\vector(1,-1){30}}
\put(100,160) {\vector(0,-1){30}}
\put(130,130) {\vector(-1,0){30}}
\put(100,130) {\vector(0,-1){30}}
\put(130,130) {\vector(0,-1){30}}
\put(75,190){5}
\put(30,160){6}
\put(1,130){1}
\put(105,160){7}
\put(90,130){8}
\put(135,130){9}
\put(90,100){3}
\put(55,130){2}
\put(135,100){4}
\put(10,190){$N$}

\put(70,70) {\vector(1,-1){30}}
\put(70,70) {\vector(-1,-1){30}}
\put(40,40) {\vector(0,-1){30}}
\put(40,40) {\vector(-1,-1){30}}
\put(100,40) {\vector(0,-1){30}}
\put(100,40) {\vector(1,-1){30}}
\put(75,70){5}
\put(30,40){6}
\put(105,40){789}
\put(1,10){1}
\put(45,10){2}
\put(90,10){3}
\put(135,10){4}
\put(10,70){$N'$}

\end{picture}

\caption{ There is a CSD map from $N$ to $N'$.  One wired lift of $N'$ is the subgraph $M$ of $Und(N)$ consisting of all edges except $\{9,8\}$.}
\end{center}
\end{figure}
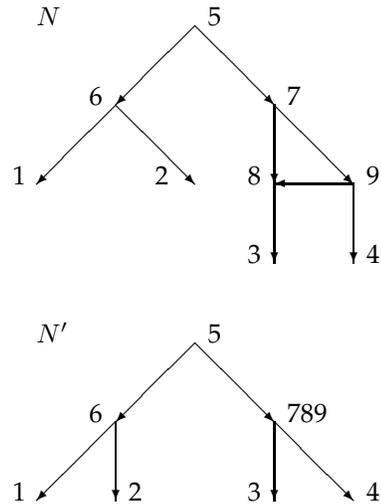

An illustration of Corollary 4.3 is given  in Figure 5.  The CSD map $f: N \to N'$ takes a vertex with label $a$ into a vertex whose label contains $a$.  $Und(N)$ contains a wired lift $M$ homeomorphic with the quartet $N'=12|34$.  One possibility for $M$ consists of all edges of $Und(N)$ except $\{9,8\}$.  The arc $(789,3)$ in $N'$ becomes the  path $7, 8, 3$ while the arc $(789,4)$ in $N'$ becomes the path $7, 9, 4$.  Note that $M$ is not an induced subgraph of $Und(N)$. Moreover, $M$ is only homeomorphic to $Und(N')$, not isomorphic to it.  Another possible wired lift includes all edges of $Und(N)$ except $\{7,8\}$. 

Consider the special case of a  CSD map from $N$ to a tree $T$.  Again, the structure of $T$ will be shown to put strong constraints on $N$.   Theorem 4.4 shows that if $N$ and $T$ are both binary  $X$-trees, then in fact $N$ and $T$ are the same tree.  

If $T$ is a rooted $X$-tree and $a$ and $b$  are in $X$, the \emph{most recent common ancestor} of $a$ and $b$, denoted mrca($a,b$), is the common ancestor of $a$ and $b$ such that no strict descendant is also a common ancestor of $a$ and $b$.  If $a$, $b$, $c$  are distinct members of $X$, we say that $T$ contains or displays the \emph{rooted triple} $ab|c$ provided that the most recent common ancestor of $a$ and $c$ is itself a strict ancestor of the most recent common ancestor of $a$ and $b$.   

\begin{thm}
Let $U$ and $T$ be rooted $X$-trees.  Suppose there is a CSD map $f: U \to T$. \\
(a) Every rooted triple $ab|c$ in $T$ is also a rooted triple of $U$.\\
(b)  If $T$ is binary, then $U$ and $T$ are homeomorphic.  
\end{thm}

\begin{proof}  The hypotheses mean that $T$ and $U$ are rooted $X$-trees in which there may be additional vertices with indegree 1 and outdegree 1 (which often are suppressed in trees).

We first show (a).  Without loss of generality we may assume that $12|3$ is in $T$.   We show $12|3$ in $U$ by considering other possibilities for $\{1,2,3\}$ in $U$.  

Suppose instead that $U$ displays $13|2$.  Let $a =$ mrca($1,2$) in $U$ and $b =$ mrca($1,3$) in $U$.  Let $c =$ mrca($1,3$) in $T$ and $d =$ mrca(1,2) in $T$.  Since $U$ displays $13|2$, in $U$ there is a directed path from $b$ to 1 and a directed path from $b$ to 3.  It follows that in $T$ there is a directed path from $f(b)$ to $f(1) = 1$ and from $f(b)$ to $f(3) = 3$.  Hence $f(b ) \leq$ mrca(1,3) $= c$ in $T$.  It follows that the image of the directed path in $U$ from $b$ to 1 is a directed path in $T$ from $f(b)$ to 1 which must pass through $d$.  In particular, $f^{-1}(d)$ must meet the path from $b$ to 1.  Similarly, in $U$ there is a directed path from $a$ to 2 and from $a$ to 3.  Hence $f(a) \leq$ mrca(2,3) $= c$ in $T$.  It follows that the directed path in $U$ from $a$ to 2 must be mapped into a directed path in $T$ from $f(a)$ to 2, which must pass through $d$.  Hence $f^{-1}(d)$ must meet the path from $a$ to 2.  

By hypothesis $f$ is connected, so $f^{-1}(d)$ is connected.   Since $f^{-1}(d)$ contains a point on the path from $a$ to 2 and also a point on the path from $b$ to 1 and $U$ is a tree, we see that $f(a) = f(b) = d$.  But this contradicts that $f(b) \leq c$.
This shows that $U$ cannot display $13|2$.  

A symmetric argument shows that $U$ cannot display $23|1$.  We wish to show $U$ displays $12|3$.  The remaining possibility is that $U$ displays the unresolved star 123.  In this case, let $a$ denote the star point in $U$.  In $U$ there is a directed path from $a$ to 1 and also from $a$ to 3.  Hence in $T$ there is a directed path from $f(a)$ to 1 and $f(a)$ to 3, so $f(a) \leq c$.  In particular the path from $a$ to 1 is taken to a path in $T$ that must pass through $d$, so the path from $a$ to 1 meets $f^{-1}(d)$.  Similarly in $U$ there is a directed path from $a$ to 2.  Its image in $T$ must pass through $d$, so the path from $a$ to 2 meets $f^{-1}(d)$.   Since $f^{-1}(d)$ is connected and $U$ is a tree, it follows that $f(a) = d$.  But this contradicts that $f(a) \leq c$.   Thus this possibility cannot arise.  This completes the proof of (a).

Part (b) follows from (a) since a rooted tree is determined up to homeomorphism by its rooted triples; see  \cite{bun71} or \cite{sem03}, p. 118.
\end{proof}

More generally, if $f: U \to T$ is a CSD map and both networks are $X$-trees, then $U$ possibly resolves some polytomies of $T$ but otherwise agrees with $T$.  The tree $U$ displayed in bold in Figure 1 together with the map $f|U: U \to N'$ shows that Theorem 4.4 is not true if $f$ is merely surjective but not connected since the binary trees $U$ and $N'$ are not homeomorphic.

If $N'$ is known and $f: N \to N'$ is a surjective digraph map but not connected, very little information about $N$ can be inferred.  Suppose $X$ is a finite set and $p: X \to  \mathbb{N}$ is a positive integral function.  The \emph{star network} with leaf set $X$ and \emph{multiplicity} $p(x)$  for $x \in X$ is the directed multigraph with vertex set $X \cup \{r\}$, root $r$ and $p(x)$ arcs $(r,x)$ for each $x \in X$; there are no other vertices or arcs.  The following theorem shows that any acyclic $X$-network $N'$ is the image of an $X$-network homeomorphic to a star network by a surjective digraph map.  Hence if $f: N \to N'$ is a surjective digraph map that is not connected, then  $N'$ puts negligible constraint on the structure of $N$.  

\begin{thm}  
Let $N' = (V',A',r',X)$ be an acyclic $X$-network.   There exists an $X$-network $N = (V,A,r,X)$ which is homeomorphic with a star network with leaf set $X$ such that there exists a surjective digraph map $f: N \to N'$.  
\end{thm}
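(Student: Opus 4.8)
The plan is to assemble $N$ from one directed ``strand'' per arc of $N'$, all sharing only the root and the leaves. First dispose of the degenerate case in which $N'$ has no arcs: then $N'$ is a single vertex, $X$ is a singleton, and we may take $N=N'$ with $f$ the identity. So assume $N'$ has at least one arc; then $r'$ is not a leaf and every leaf of $N'$ has an incoming arc, so in particular every vertex of $N'$ is an endpoint of some arc. For each arc $a=(u_a,v_a)$ of $N'$ I first produce a directed path $P_a$ in $N'$ from $r'$ through $a$ to a leaf: since $r'$ is the root there is a directed path from $r'$ to $u_a$; since $N'$ is an $X$-network there is a directed path from $v_a$ to some leaf $x_a\in X$; and concatenating these with the arc $a$ gives a directed walk which, because $N'$ is \emph{acyclic}, has no repeated vertex and hence is a directed path $P_a=(r'=w^a_0,w^a_1,\dots,w^a_{k_a}=x_a)$ traversing $a$. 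This is the only place acyclicity is used.

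Next, build $N$ as follows. Its vertex set consists of a new root $r$, the set $X$, and, for each arc $a$ of $N'$, fresh internal vertices $y^a_1,\dots,y^a_{k_a-1}$ (none when $k_a=1$), all the $y^a_i$ being distinct new symbols across all strands; for uniformity set $y^a_0=r$ and $y^a_{k_a}=x_a$. The arcs of $N$ are, for each $a$, the strand $(y^a_0,y^a_1),(y^a_1,y^a_2),\dots,(y^a_{k_a-1},y^a_{k_a})$. Then $r\le v$ for every vertex $v$ since every strand issues from $r$; every vertex lies on some strand and hence has a directed path to a leaf; and the leaves of $N$ are exactly $\{x_a:a\in A'\}=X$ (using that every leaf of $N'$ has an incoming arc). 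So $N$ is an $X$-network. Each internal strand vertex $y^a_i$ with $1\le i\le k_a-1$ has indegree $1$ and outdegree $1$, hence total degree $2$ in $Und(N)$, so suppressing all of them turns $N$ into the star network on $X$ in which each $x$ has multiplicity $p(x)=|\{a\in A':x_a=x\}|$; thus $N$ is homeomorphic to a star network.

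Finally define $f\colon N\to N'$ by $f(r)=r'$, $f(x)=x$ for $x\in X$, and $f(y^a_i)=w^a_i$. This is well defined since the $y^a_i$ with $1\le i\le k_a-1$ are strand-specific while $r$ and the leaves receive consistent values; it is an $X$-digraph map because each strand arc $(y^a_{i-1},y^a_i)$ maps to the arc $(w^a_{i-1},w^a_i)$ of $P_a$ in $N'$; and it is surjective because every arc of $N'$ is some $a$, realized as a strand arc of $P_a$, and every vertex of $N'$, being an endpoint of some arc, lies in the image. The argument is routine; the only genuine content is the existence of a root-to-leaf directed path through a prescribed arc (which rests on acyclicity) and the bookkeeping that keeps every new internal vertex at total degree $2$, so that $N$ really is a subdivision of a star network. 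I expect the surjectivity verification to be the step most worth stating carefully, since indexing the strands by the arcs of $N'$ is precisely what forces every arc of $N'$ to be covered.
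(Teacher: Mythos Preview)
Your proof is correct and follows the same strategy as the paper: build $N$ as a bundle of root-to-leaf directed paths of $N'$, glued only at the root and at the leaves, so that suppressing the interior vertices yields a star network, and let $f$ carry each strand onto the corresponding path in $N'$. The only difference is the indexing of the strands: the paper takes one strand for \emph{every} directed $r'$-to-$x$ path in $N'$ (so $p(x)=|P(x)|$), whereas you take one strand per \emph{arc} of $N'$, choosing a single root-to-leaf path through it; your version is more economical (polynomially many strands rather than possibly exponentially many) and makes surjectivity on arcs immediate, and the paper in fact notes exactly this variant in its closing remark after the proof.
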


\begin{proof}
For each $x \in X$, let $P(x)$ be the collection of directed paths in $N'$ from $r'$ to $x$.  Suppose there are $p(x)=|P(x)|$ such paths where, for $i = 1, \cdots, p(x)$ the $i$-th path has $k(x,i)$ arcs and is given by 
$r' =v_{(x,i,0)}$, $v_{(x,i,1)}$, $\cdots$, $v_{(x,i,k(x,i))} = x$.  Construct $N$ with $p(x)$ paths from $r$ to $x$, with no vertices in common except $r$ and $x$.   The $i$-th such path has vertices $r'$, $w_{(x,i,1)}$, $w_{(x,i,2)}$, $\cdots$, $w_{(x,i,k(x,i))} = x$.  Each arc of $N$ arises as an arc from such a path, and there are no other arcs.  There is a surjective digraph map $f: N \to N'$ given by $f(r) = r'$ and $f(w_{(x,i,j)}) = v_{(x,i,j)}$.  Note that $N$ is homeomorphic to a star network with $p(x)$ arcs from $r$ to $x$ and no other arcs.    
\end{proof}

See Figure 6 for an example.  In fact, instead of $P(x)$ one may use a subset of $P(x)$ such that each arc of $N'$ occurs in some path in some $P(x)$.

\begin{figure}[!htb]  
\begin{center}

\begin{picture}(150,210) (0,0)
\put(60,200){\vector(-2,-1){60}}
\put(60,200){\vector(-1,-1){30}}
\put(60,200){\vector(0,-1){30}}
\put(60,200){\vector(1,-1){30}}
\put(60,200){\vector(2,-1){60}}
\put(30,170){\vector(0,-1){30}}
\put(60,170){\vector(0,-1){30}}
\put(90,170){\vector(0,-1){30}}
\put(120,170){\vector(0,-1){30}}
\put(30,140){\vector(0,-1){30}}
\put(60,140){\vector(0,-1){30}}
\put(90,140){\vector(-1,-1){30}}
\put(120,140){\vector(0,-1){30}}
\put(0,160){1}
\put(20,110){2}
\put(20,140){$b$}
\put(20,170){$a$}
\put(50,170){$a$}
\put(50,140){$b$}
\put(50,110){$3$}
\put(100,170){$a$}
\put(100,140){$c$}
\put(130,170){$a$}
\put(130,140){$c$}
\put(130,110){$4$}
\put(60,210){$r$}
\put(30,200){$N$}

\put(100,100){\vector(-1,-1){30}}
\put(100,100){\vector(1,-1){30}}
\put(70,70){\vector(-1,-1){30}}
\put(70,70){\vector(1,-1){30}}
\put(40,40){\vector(-1,-1){30}}
\put(40,40){\vector(1,-1){30}}
\put(100,40){\vector(-1,-1){30}}
\put(100,40){\vector(1,-1){30}}
\put(80,95){$r'$}
\put(80,70){$a$}
\put(50,40){$b$}
\put(20,10){$2$}
\put(140,70){$1$}
\put(110,40){$c$}
\put(80,10){$3$}
\put(140,10){$4$}
\put(30,70){$N'$}

\end{picture}

\caption{  Two $X$-networks $N$ and $N'$.  There is a surjective digraph map $f$ from $N$ to $N'$ given by labeling each vertex $v$ of $N$ with the label of $f(v)$ in $N'$.  The map $f$ is not connected, and $N$ is homeomorphic to a star network.  None of the relationships in $N'$ between the leaves are present in $N$,  and there is no wired lift of $N'$ into $N$. }
\end{center}
\end{figure}
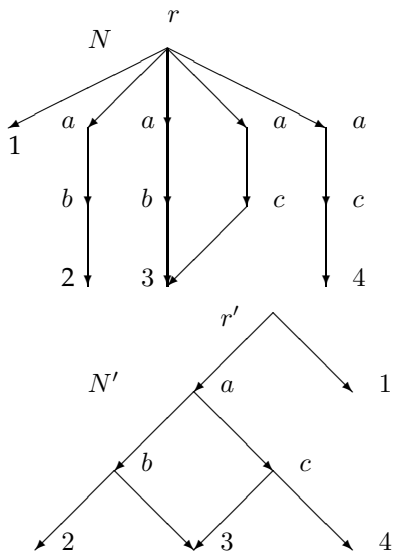

\section{ Successively Cluster-Distinct Networks}

Let $\mathcal{P}(X)$ denote the collection of subsets of $X$.  Following \cite{bar04}, given an $X$-network $N=(V,A,r,X)$, define the \emph{cluster map}  $cl: V \to \mathcal{P}(X)$ by
	$cl(v) = \{x \in X: v\leq x\}$, and call $cl(v)$ the \emph{cluster of v}.  
Sometimes for clarity $cl(v)$ will also be denoted $cl(v,N)$.  The taxon $v$ has the possibility of influencing the extant genomes for taxa in $cl(v)$ but  cannot influence the genomes of  taxa not in $cl(v)$.    

Call an $X$-network \emph{successively cluster-distinct} or more briefly \emph{cluster-distinct} if for each arc $(a,b)$ it is true that $cl(a) \neq cl(b)$. 
It is easy to construct an example of a successively cluster-distinct network $N$ with two vertices $a$ and $b$ for which $cl(a)=cl(b)$; the vertices $a$ and $b$ may not, however, be connected by an arc.  

Networks which are not successively cluster-distinct may have many successive vertices in a directed path all of which have the same cluster and hence potentially leave genetic influence on precisely the same extant vertices (members of $X$).  It may therefore be hard to distinguish their different genetic impacts on extant taxa.  Consequently it is plausible that we should simplify such a network in order to highlight features that are more likely distinguishable.  

The following algorithm Cluster-Distinct takes as input a network $N$ and essentially outputs a network $ClDis(N)$ which is successively cluster-distinct.  The idea is very simple.  Whenever $(u,v)$ is an arc and $cl(u,N) = cl(v,N)$, then $u$ and $v$ are identified.  Clearly $\{u,v\}$ is connected in $N$ since $(u,v)$ is an arc.  As a result of doing all such identifications, one obtains $ClDis(N)$.  

Here is a more precise description of the algorithm: 
\bigskip
\hrule height1pt
\medskip
\noindent\textbf{Algorithm} Cluster-Distinct\\
\textbf{Input}: $N = (V,A,r,X)$ is a network with leaf set $X$. \\
\textbf{Output}: A partition of $V$.  \\
\textbf{Procedure}: We construct a sequence $S_i$ of subsets of $V$. \\
(1) Let $S_0$ be the set of singleton sets from $V$.\\
(2)  Repeat recursively the following if any such step can be performed:
	Given $S_i$, suppose distinct $B_1$ and $B_2$ in $S_i$ satisfy that $u_1 \in B_1$, $u_2 \in B_2$ $(u_1,u_2)$ is an arc of $N$, and $cl(u_1,N) = cl(u_2,N)$.  Then $S_{i+1}$ is found by removing $B_1$ and $B_2$ from $S_i$ and adjoining $B_1 \cup B_2$.  Thus $S_{i+1} := (S_i -\{B_1, B_2\} ) \cup \{B_1 \cup B_2\}$. \\
(3)  Suppose for some $m$, $S_m$ has been constructed but there are no further ways to perform (2).  Return $S_m$. 
\medskip
\hrule height1pt
\bigskip

It is clear that $S_m$ is a partition of $V$.   

\begin{thm}
Let $N = (V,A,r,X)$ be an $X$-network.  Let $S_m$ denote the result of performing Algorithm Cluster-Distinct. \\
(1) $N/S_m$ is a successively cluster-distinct $X$-network.\\
(2) If $N$ is acyclic, then $N/S_m$ is acyclic. \\
(3) $S_m$ does not depend on the order in which the operations of Cluster-Distinct are performed. 
\end{thm}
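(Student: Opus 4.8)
The plan is to prove the three parts roughly in the order (3), then (1), then (2), since the confluence statement in (3) makes the other parts cleaner — once we know $S_m$ is well-defined, we can speak of \emph{the} partition and \emph{the} network $ClDis(N)$.

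For part (3), the natural approach is to recognize the algorithm as computing the transitive closure of a symmetric relation, which is manifestly order-independent. Concretely, first I would observe that the cluster map is fixed once and for all by $N$: merging vertices in the partition never changes which leaves a given vertex can reach, so $cl(u,N)$ is a static label. Then define a graph $H$ on vertex set $V$ with an (undirected) edge $\{u_1,u_2\}$ whenever $(u_1,u_2)$ or $(u_2,u_1)$ is an arc of $N$ and $cl(u_1,N)=cl(u_2,N)$. I claim $S_m$ is exactly the partition of $V$ into connected components of $H$, regardless of execution order. One direction: each merge step in the algorithm only ever unites two blocks joined by an $H$-edge, so by induction every block of every $S_i$ is contained in a single component of $H$; hence $S_m$ refines the component partition. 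Conversely, if the algorithm has halted at $S_m$ but two vertices $u,v$ in the same $H$-component lie in different blocks, then along an $H$-path from $u$ to $v$ there are two consecutive vertices in different blocks joined by an $H$-edge, so step (2) still applies — contradicting that the algorithm halted. Therefore $S_m$ equals the component partition, which depends only on $H$, which depends only on $N$. I'd also note termination is obvious since each step strictly decreases $|S_i|$.

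For part (1), I would apply Theorem 3.1: I must check that the equivalence relation $\sim$ induced by $S_m$ is leaf-preserving, so that $N/S_m$ is an $X$-network. If $x\in X$ and $u\in[x]$ with $(u,v)$ an arc, I need $v\in[x]$. Since $[x]$ is a connected component of $H$ containing the leaf $x$, and $cl(x,N)=\{x\}$, every vertex in $[x]$ has cluster $\{x\}$ (clusters are constant on $H$-components because adjacent vertices share a cluster). Then $cl(u,N)=\{x\}$, and since $(u,v)$ is an arc, $cl(v,N)\subseteq cl(u,N)=\{x\}$; as $v$ is ancestral to some leaf, $cl(v,N)=\{x\}$, so $\{u,v\}$ is an $H$-edge and $v\in[x]$. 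Hence $\sim$ is leaf-preserving and Theorem 3.1 gives that $N/S_m$ is an $X$-network. For cluster-distinctness: first establish that $cl([v],N/S_m)=cl(v,N)$ — this follows because a directed path in $N$ projects to a directed path in $N/S_m$ (so $\subseteq$ may go the easy way) and conversely a directed path in $N/S_m$ lifts, using surjectivity of the quotient map and the fact that within a block $[w]$ all vertices have cluster $cl(w,N)$, to show the reachable leaves agree. Then if $([a],[b])$ is an arc of $N/S_m$, there is an arc $(a',b')$ of $N$ with $a'\in[a],b'\in[b]$ and $[a]\neq[b]$; if $cl([a],N/S_m)=cl([b],N/S_m)$ we'd have $cl(a',N)=cl(b',N)$, making $\{a',b'\}$ an $H$-edge, forcing $[a]=[b]$, a contradiction.

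For part (2), I would argue contrapositively using the observation after Theorem 3.1: if $N$ is acyclic but $N/S_m$ has a directed cycle, there is a nontrivial closed directed walk $[v_0],[v_1],\dots,[v_k]=[v_0]$ in $N/S_m$ with not all blocks equal. Lifting (via surjectivity), for each step choose an arc $(p_i,q_i)$ of $N$ with $p_i\in[v_i]$, $q_i\in[v_{i+1}]$. Since each block is $H$-connected, for each $i$ there is an undirected $H$-path in $N$ inside $[v_i]$ from $q_{i-1}$ to $p_i$; but every edge of $H$ corresponds to an arc of $N$ between two vertices of the \emph{same} cluster. Now compare clusters around the loop: along each arc $(p_i,q_i)$ we have $cl(q_i,N)\subseteq cl(p_i,N)$, and along each $H$-edge the cluster is unchanged, so going all the way around gives $cl(p_0,N)\subseteq cl(p_0,N)$ with equality forced everywhere — in particular $cl(p_i,N)=cl(q_i,N)$ for every $i$, so each $(p_i,q_i)$ is itself an $H$-edge. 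But then $p_0$ and $q_0$ lie in the same $H$-component, i.e. the same block, and chaining all these $H$-edges and $H$-paths shows all the $[v_i]$ coincide, contradicting that the cycle in $N/S_m$ was nontrivial. (Alternatively and more slickly: all vertices $p_i,q_i$ and the intermediate $H$-path vertices have a common cluster, hence lie in one $H$-component, hence one block — so the "cycle" was a single vertex.)

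The main obstacle I anticipate is the bookkeeping in part (1) showing $cl([v],N/S_m)=cl(v,N)$ rigorously — the $\supseteq$ inclusion requires carefully lifting a directed path in the quotient back to $N$, using both surjectivity of $\phi$ (Theorem 3.1(2)) and the constancy of clusters on blocks to bridge between the lifted arcs; everything else is either the clean "transitive closure is confluent" argument for (3) or a short cluster-comparison-around-a-cycle argument for (2).
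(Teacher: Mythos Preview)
Your proposal is correct and follows essentially the same approach as the paper: part (3) via the transitive closure of the relation ``adjacent with equal clusters'' (your auxiliary graph $H$ is exactly the paper's $\sim$), part (1) via leaf-preservation plus the observation that an arc between distinct blocks with equal clusters would trigger a further merge, and part (2) via chasing cluster containments around a putative cycle to force all blocks to coincide. Your reordering (proving (3) first) and your more explicit verification of leaf-preservation and of $cl([v],N/S_m)=cl(v,N)$ are organizational improvements rather than a different strategy.
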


\begin{IEEEproof}
(1) We first show that the partition $S_m$ is leaf-preserving.  For every $x\in X$, $cl(x) = \{x\}$.  Whenever a vertex $u$ is merged with a leaf $x$, $cl(u) = cl(x) = \{x\}$.  It follows that no two distinct leaves are equivalent.  Moreover, suppose $x\in X$ and $u\in [x]$ and $(u,v)$ is an arc.   Then $cl(u)=\{x\}$.  Since $(u,v)$ is an arc it follows that $cl(v)\subseteq cl(u)=\{x\}$ and since $cl(v)$ is nonempty we must have $cl(v)=\{x\}$. Hence $[v]=[u]=[x]$ so $v\in [x]$.  This proves that $S_m$ is leaf-preserving.    

By Theorem 3.1, $N/S_m$ is an $X$-network.  Note that if $u$ and $v$ are in $B \in S_m$, then $cl(u,N) = cl(v,N)$.  It is easy to see that $[u] \in N/S_m$ satisfies $cl([u],N/S_m)= cl(u,N)$.  To see that $N/S_m$ is successively cluster-distinct, suppose $([u],[v])$ is an arc of $N/S_m$.  Then there exist $u' \in [u]$ and $v' \in [v]$ 
with$(u',v')$ an arc of $N$.   If $cl(u',N) = cl(v',N)$ then by the algorithm $[u]$ and $[v]$ would be merged.  Hence $cl([u],N/S_m) \neq cl([v],N/S_m)$.  

(2) Suppose that there were a directed cycle $[u] = [u_0],$ $[u_1]$, $[u_2], \cdots,$ $[u_k]=[u]$ in $N/S_m$.  Then for $j = 0,  \cdots, k-1$, there exist $u_j'$ and $u_j''$ in $[u_j]$ such that $(u_j'',u_{j+1}')$ is an arc of $N$.  It is immediate that if $(w,v)$ is an arc in $N$, then $cl(w,N)$ contains $cl(v,N)$.  It follows that
$cl(u_0'',N)$ contains $cl(u_1',N) = cl(u_1'',N)$, which contains $cl(u_2',N) = cl(u_2'',N),$ $\cdots$,  which contains $cl(u_k',N) = cl(u_0'',N)$.  Hence all the clusters are the same whence algorithm Cluster-Distinct would merge them.   Thus $[u_0] = [u_1]= \cdots =[u_{k-1}]=[u_k]$.

(3) When the algorithm terminates, $S_m$ consists of the equivalence classes under the equivalence relation $\approx$ obtained as follows:\\
(a) First, define a relation $\sim$ on $V$ such that if $(u,v)$ is an arc, $v \notin X$, and $cl(u,N) = cl(v,N)$, then $u\sim v$ and $v\sim u$.\\
(b) $u \approx w$ iff either $u = w$ or else there exists a sequence $u_0, u_1, \cdots, u_k$ such that
$u = u_0$, $u_k = w$, and for $i = 0, \cdots, k-1$, $u_i \sim u_{i+1}$.\\
The equivalence classes clearly are independent of the order of operations.  Hence (3) follows.  
\end{IEEEproof}

Given $N$ we denote by $ClDis(N) := N/S_m$. Call $ClDis(N)$ the \emph{successively cluster-distinct network obtained from} $N$. 

\begin{cor}
There is a connected surjective $X$-digraph map $\phi: N \to ClDis(N)$.  Moreover, $ClDis(N)$ has a wired lift into $N$.
\end{cor}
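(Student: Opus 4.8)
The plan is to deduce both claims from the machinery already assembled. First I would recall that $ClDis(N) = N/S_m$ and that the proof of Theorem 5.1 already showed the partition $S_m$ is leaf-preserving. Let $\sim$ be the equivalence relation on $V$ whose classes are the blocks of $S_m$. By Theorem 3.1, the natural quotient map $\phi: N \to N/S_m$ given by $\phi(u) = [u]$ is then a surjective $X$-digraph map whose kernel is the set of blocks of $S_m$. So it remains only to check connectedness, and by Theorem 3.1(3) it suffices to show that each block $B \in S_m$ induces a connected subgraph of $Und(N)$.

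The key step is an induction on the steps of Algorithm Cluster-Distinct: I claim every block appearing in any $S_i$ is connected in $N$. In $S_0$ each block is a singleton, hence trivially connected. For the inductive step, $S_{i+1}$ is obtained from $S_i$ by replacing two blocks $B_1, B_2$ with $B_1 \cup B_2$, where there is an arc $(u_1,u_2)$ of $N$ with $u_1 \in B_1$ and $u_2 \in B_2$. By the inductive hypothesis, $Und(N)[B_1]$ and $Und(N)[B_2]$ are connected, and $\{u_1,u_2\}$ is an edge of $Und(N)[B_1 \cup B_2]$ joining those two pieces, so $Und(N)[B_1 \cup B_2]$ is connected; every other block of $S_{i+1}$ already occurs in $S_i$ and is connected by hypothesis. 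Taking $i = m$, every block of $S_m$ is connected, so $\phi$ is connected, and therefore $\phi: N \to ClDis(N)$ is a CSD map.

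For the ``moreover'' statement I would simply invoke Theorem 4.1 with $f = \phi$: since $\phi: N \to ClDis(N)$ is a CSD map, surjectivity lets us choose for each arc $(u',v')$ of $ClDis(N)$ an arc $(u,v)$ of $N$ with $\phi(u)=u'$ and $\phi(v)=v'$, and Theorem 4.1 then produces a wired lift $M$ of $ClDis(N)$ into $N$ having these edges as its nondegenerate edges (indeed, $M$ is a resolution of $Und(ClDis(N))$).

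I do not anticipate a genuine obstacle here: all the content lies in the observation that the only operation performed by the algorithm glues two already-connected blocks along an existing arc of $N$, so connectedness of the blocks is an invariant of the procedure; once that is noted, the corollary is an immediate consequence of Theorems 3.1 and 4.1.
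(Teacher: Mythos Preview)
Your proposal is correct and follows essentially the same approach as the paper: the paper's proof simply states ``By induction, for all $i$, each member of $S_i$ is connected, whence each member of $S_m$ is connected. The result follows from Theorem 4.1.'' You have just spelled out that induction and the invocation of Theorems 3.1 and 4.1 in more detail.
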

\begin{proof}
 By induction, for all $i$, each member of $S_i$ is connected, whence each member of $S_m$ is connected. The result follows from Theorems 3.1 and  4.1.
\end{proof}

We call $\phi$ the \emph{natural projection} of $N$ onto $ClDis(N)$.

As a consequence, any wired lift of $\phi$ shows that $N$ has structure mimicking that of $ClDis(N)$.  Hence it may be natural to restrict attention in a given case to cluster-distinct networks.  Such networks are typically much simpler than the initial networks and exhibit much of the essential structure.

Algorithm Cluster-Distinct is very fast, indeed linear in the size of the network, as shown in the following theorem:

\begin{thm}  
 Let $N = (V,A,r,X)$ be an $X$-network.  Algorithm Cluster-Distinct may be carried out in time $O(|A|)$.  The cluster function $cl$ can be computed in time $O(|A|+|V|)$.  
\end{thm}

\begin{proof}
The function $cl$ may be computed as follows:   For each leaf $x \in X$, $cl(x) = \{x\}$.  Suppose $cl(c)$ is known for each child $c$ of $v$; then $cl(v) = \cup [cl(c): c $ is a child of $v]$.  Hence the cluster function may be computed in time $O(|A|+|V|)$.  

Once $cl$ is known, for each arc $(a,b)$ one identifies $a$ and $b$ precisely when $cl(a) = cl(b)$.  Hence the identifications may be carried out in time $O(|A|)$.  
\end{proof}

The next result describes an interesting property of $ClDis(N)$.

\begin{thm}
Let $N = (V,A,r,X)$ and $N' = (V',A',r',X)$ be $X$-networks.  Let $\phi:N \to ClDis(N)$ be the natural projection.  Suppose $f: N \to N'$ is a CSD map.  Assume whenever $(u,v) \in A$ 
and $cl(u,N) = cl(v,N)$ that it follows that $f(u) = f(v)$.  Then there exists a unique CSD map $g: ClDis(N) \to N'$ such that $f = g \circ \phi$.
\end{thm}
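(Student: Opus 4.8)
The plan is to obtain $g$ as an application of the factorization result Theorem 3.5, taking the relevant partitions to be the kernel of $f$ and the kernel of $\phi$. Write $\mathcal{P} = \{f^{-1}(v'): v' \in V'\}$ for the kernel of the surjective $X$-digraph map $f$, and write $\mathcal{Q} = S_m$ for the kernel of $\phi$; both are partitions of $V$. The first and main step is to check that $\mathcal{Q}$ is subordinate to $\mathcal{P}$, that is, that every block of $S_m$ is contained in some fiber $f^{-1}(v')$. Recall from the proof of Theorem 5.1(3) that $S_m$ is exactly the set of equivalence classes of the equivalence relation $\approx$ generated by the relation $\sim$, where $u \sim v$ whenever an arc of $N$ joins $u$ and $v$ whose head does not lie in $X$ and whose two endpoints have equal clusters. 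The hypothesis of the theorem says precisely that $u \sim v$ implies $f(u) = f(v)$. Since the relation ``$f(u) = f(v)$'' is itself an equivalence relation on $V$ containing $\sim$, it contains the equivalence relation $\approx$ generated by $\sim$; hence each $\approx$-class lies inside a single fiber of $f$. That is exactly the statement that $\mathcal{Q}$ is subordinate to $\mathcal{P}$.

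With subordinacy in hand, I would apply Theorem 3.5(1) to $f$, $\mathcal{P}$, and $\mathcal{Q}$: it yields surjective $X$-digraph maps $g_0 : N \to N/\mathcal{Q}$ and $g : N/\mathcal{Q} \to N'$ with $f = g \circ g_0$. By its construction in that proof (together with Theorem 3.1) the map $g_0$ is $v \mapsto [v]_{\mathcal{Q}}$, which is the natural projection $\phi : N \to ClDis(N)$; so in fact $f = g \circ \phi$ with $g : ClDis(N) \to N'$ surjective. To upgrade $g$ to a CSD map I would invoke Theorem 3.5(2): $f$ is connected because it is a CSD map, and every member of $\mathcal{Q} = S_m$ is connected (Corollary 5.2, or directly from the induction built into Algorithm Cluster-Distinct, since each merging step joins two blocks along an arc). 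Hence both $\phi$ and $g$ are connected, so $g$ is a CSD map as required.

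Uniqueness is then forced by surjectivity of $\phi$: if $g' : ClDis(N) \to N'$ is any $X$-digraph map with $f = g' \circ \phi$, then every vertex of $ClDis(N)$ has the form $\phi(v)$ for some $v \in V$, and $g'(\phi(v)) = f(v) = g(\phi(v))$, so $g' = g$.

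I do not expect a genuine obstacle here: once the identification of the hypothesis with the generating relation $\sim$ of $S_m$ is made, subordinacy is essentially immediate, and Theorems 3.1 and 3.5 together with Corollary 5.2 supply everything else. The only point needing care is the bookkeeping of which partition plays the role of $\mathcal{P}$ and which plays $\mathcal{Q}$ in Theorem 3.5, and the verification that the map ``$g_0$'' produced there really is the natural projection $\phi$ rather than merely $X$-isomorphic to it — which follows from the explicit formula $g_0(v) = [v]_{\mathcal{Q}}$.
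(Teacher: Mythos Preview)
Your proposal is correct and follows essentially the same route as the paper's proof: identify the kernels, show the kernel of $\phi$ is subordinate to the kernel of $f$, and invoke Theorem 3.5. Your labeling of $\mathcal{P}$ and $\mathcal{Q}$ actually matches the convention of Theorem 3.5 (the paper's own proof swaps them), and you supply the details the paper leaves implicit---the use of the generating relation $\sim$ from Theorem 5.1(3) to verify subordinacy, the identification of $g_0$ with $\phi$, and the explicit uniqueness argument via surjectivity of $\phi$.
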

\begin{proof}
Let $\mathcal{P}$ and $\mathcal{Q}$ be respectively the kernels of $\phi$ and $f$.  By hypothesis, $\mathcal{P}$ is subordinate to $\mathcal{Q}$.  By Theorem 3.5 the desired map $g$ exists, and $g$ is a CSD map since both $f$ and $\phi$ are connected.  Uniqueness is immediate.  
\end{proof}

Let $N = (V,A,r,X)$ and $N' = (V',A',r',X)$ be $X$-networks. A CSD map $f: N \to N'$ with kernel $\mathcal{P}$ is \emph{cluster-distinct} if,  whenever $(u,v)$ is an arc of $N$ and $cl(u,N) = cl(v,N)$, 
then $f(u) = f(v)$.  A cluster-distinct map $f: N \to N'$ is \emph{universal (for cluster-distinct maps)} provided that given any cluster-distinct map $g: N \to N''$  there is a unique cluster-distinct map $h: N' \to N''$ such that $g = h \circ f$.

The essential content of Theorem 5.4 is that the natural projection map $\phi: N \to ClDis(N)$ is universal.  More explicitly, we have the following corollary:  

\begin{cor} 
Let  $N = (V,A,r,X)$ be an $X$-network. Let $\phi: N \to ClDis(N) $ be the natural projection map  Then $\phi$ is universal for cluster-distinct maps.
\end{cor}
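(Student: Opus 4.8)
The plan is to verify directly the two things the definition of ``universal for cluster-distinct maps'' demands: first, that $\phi$ is itself a cluster-distinct map, and second, that every cluster-distinct map out of $N$ factors uniquely through $\phi$ by a cluster-distinct map. That $\phi$ is a CSD map is exactly Corollary 5.2. To see that $\phi$ is cluster-distinct, suppose $(u,v)$ is an arc of $N$ with $cl(u,N)=cl(v,N)$ and $v\notin X$; then Algorithm Cluster-Distinct is permitted to merge the blocks containing $u$ and $v$, so these two vertices end up in the same block of $S_m$, i.e. $\phi(u)=\phi(v)$. (Equivalently, $u\sim v$ in the notation of the proof of Theorem 5.1(3), hence $u\approx v$ and $[u]=[v]$.)

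For the factorization, I would let $g\colon N\to N''$ be an arbitrary cluster-distinct map. By definition $g$ is a CSD map, and the assertion that $g$ is cluster-distinct is precisely the statement that $f=g$ satisfies the extra hypothesis of Theorem 5.3: whenever $(u,v)\in A$, $v\notin X$, and $cl(u,N)=cl(v,N)$, then $g(u)=g(v)$. Applying Theorem 5.3 therefore yields a unique CSD map $h\colon ClDis(N)\to N''$ with $g=h\circ\phi$. Uniqueness of $h$ among \emph{cluster-distinct} maps with this property is then automatic, since any such map is in particular a CSD map and Theorem 5.3 already gives uniqueness at that level. So the one remaining point is that this $h$ is cluster-distinct.

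To check that, I would take an arc $([a],[b])$ of $ClDis(N)$ with $cl([a],ClDis(N))=cl([b],ClDis(N))$ and $[b]\notin X$, and lift it: by the definition of the quotient digraph there are $a'\in[a]$ and $b'\in[b]$ with $(a',b')\in A$. The proof of Theorem 5.1 records that $cl([u],ClDis(N))=cl(u,N)$ for every vertex, so $cl(a',N)=cl(b',N)$. Moreover $b'\notin X$, for if $b'$ were a leaf $x$ then $[b]=[b']=[x]$ would be the leaf of $ClDis(N)$ labelled $x$, against $[b]\notin X$. Cluster-distinctness of $g$ now gives $g(a')=g(b')$, and since $g=h\circ\phi$ with $\phi(a')=[a]$ and $\phi(b')=[b]$, we get $h([a])=h([b])$. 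Hence $h$ is cluster-distinct, which finishes the argument.

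Most of the real work has already been done in Theorem 5.3, so this corollary is largely a matter of bookkeeping; the step that needs care — and the one I would regard as the crux — is the last paragraph, confirming that cluster-distinctness transfers from $N$ down to $ClDis(N)$ along $\phi$. This rests on exactly two features of the quotient: that each arc of $ClDis(N)$ lifts to an arc of $N$, and that the natural projection preserves clusters.
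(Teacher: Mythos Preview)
Your proof is correct and follows the paper's approach: invoke Theorem~5.3 to produce the unique CSD map $h$ with $g=h\circ\phi$, then argue that $h$ is cluster-distinct. The only substantive difference is in that last step. You lift an arc of $ClDis(N)$ with equal clusters back to $N$ and push cluster-distinctness of $g$ forward; the paper instead observes that by Theorem~5.1(1) the network $ClDis(N)$ is itself successively cluster-distinct, so there simply are no arcs $([a],[b])$ with $cl([a],ClDis(N))=cl([b],ClDis(N))$, and the cluster-distinctness condition on $h$ holds vacuously. Your argument is sound, but the hypothesis you carefully analyze never actually occurs, so the whole third paragraph can be replaced by a one-line appeal to Theorem~5.1(1). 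On the other hand, you explicitly verify that $\phi$ is a cluster-distinct map, which the definition of ``universal'' formally requires and which the paper's proof leaves implicit; that is a worthwhile addition.
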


\begin{proof}
Suppose $g: N \to N'$ is a cluster-distinct map.  
By Theorem 5.4, there exists a unique CSD map $h: ClDis(N) \to N'$ such that $g = h \circ \phi$.  Since $ClDis(N)$ is cluster-distinct, it is immediate that $h$ is cluster-distinct.  
\end{proof}

The network $ClDis(N)$ is in fact uniquely determined up to isomorphism by its universality property, as shown in the following result.  

\begin{thm} 
Let $N = (V,A,r,X)$ be an $X$-network.  Suppose $M$ is a phylogenetic $X$-network and $f:N \to M$ is a cluster-distinct CSD map which is universal.  Then $M$ is isomorphic with $ClDis(N)$.  
\end{thm}

\begin{proof} 
By Theorem 5.4 there is a unique CSD map $g: ClDis(N)\to M$ such that 
$f = g \circ \phi$.  Similarly, since $f$ is universal and $\phi:N \to ClDis(N)$ is a cluster-distinct map, there is a unique CSD map $h: M \to ClDis(N)$ such that $\phi = h \circ f$.  

It follows that
$ f = g \circ h \circ f$ and $\phi = h \circ g \circ \phi$.    Since $\phi = h \circ f$ we then have
$\phi = h \circ f = h \circ g \circ h \circ f$.
By uniqueness of the map $h$ it follows
$h \circ g \circ h = h$.

Similarly $f = g \circ \phi = g \circ h \circ f$.  By uniqueness of the map $g$ it follows
$g \circ h \circ g = g$.  

I claim that for all vertices $v$ of $ClDis(N)$ we have $(h\circ g)(v) = v$.
To see this, since $h$ is surjective there exists a vertex $u$ of $M$ such that $h(u) = v$.   But then $(h \circ g \circ h) (u) = h(u)$ so $(h \circ g) (v) = v$ for all $v$.

Similarly, for all vertices $u$ of $M$ we have $(g \circ h)(u) = u$.  To see this, since $g$ is surjective there exists a vertex $v$ of $ClDis(N)$ such that $g(v) = u$.  Hence
$(g \circ h \circ g )(v) = g(v)$ so $(g\circ h)(u) = u$ for all $u$.  

It follows that $g$ and $h$ are inverses of each other and hence isomorphisms of the networks . 
\end{proof}

For an example, consider the network $N$ in Figure 1.   Then $ClDis(N)$ is the tree $N'$  shown in Figure 1, and $f$ is the natural projection.  The image in $N'$ of each vertex in $N$ under the corresponding digraph map $\phi$ is indicated by the label of each vertex of $N$ in Figure 1.   In general, however, $ClDis(N)$ need not be a tree.

There are several interesting variants of Algorithm Cluster-Distinct.  One variant modifies step (2) so as never to identify a leaf with a parent having the same cluster.  Thus we replace (2) by ($2'$) as follows:\\
($2'$)  Repeat recursively the following if any such step can be performed:
	Given $S_i$, suppose distinct $B_1$ and $B_2$ in $S_i$ satisfy that $u_1 \in B_1$, $u_2 \in B_2$ $(u_1,u_2)$ is an arc of $N$,  $cl(u_1,N) = cl(u_2,N)$, and $u_2$ is not a leaf of $N$.  Then $S_{i+1} := (S_i -\{B_1, B_2\} ) \cup \{B_1 \cup B_2\}$. 

The advantage of $(2')$ is that tree-child leaves do not become hybrid in $ClDis(N)$.  

\section{Discussion}

This paper shows that the existence of a CSD map $f$ from $N$ to $N'$ implies interesting relationships between $N$ and $N'$.  By Theorems 3.4 and 3.5, CSD maps have good functorial properties; the composition of CSD maps is a CSD map, and certain CSD maps can be induced from other CSD maps.  
By Theorem 4.1, the CSD map implies the existence of a wired lift of $N'$ into $N$.  Such wired lifts show that some of the structure of $N'$ exists in $N$ as a ``skeleton''.  

Since Darwin, trees have been the primary method to describe phylogenies.  Now that hybridization and lateral gene transfer have been shown \cite{dag08}, \cite{doo07} to be important biologically, we need to consider other types of networks to be allowed in a useful analysis.  The true network $N$ containing each individual and all its progeny is the underlying reality, but such a network $N$ is too complicated to allow reconstruction from extant taxa.  A cartoon of  such a network $N$ is shown in Figure 1, in which $N'$ gives a plausible species tree for $N$.  In this case, $N' = ClDis(N)$.  

The author believes that, when one is trying to reconstruct a network $N$ from data, it is reasonable first to try to construct a successively cluster-distinct network $M$.  Theorem 5.1 shows that the cluster-distinct network $M=ClDis(N)$ always exists.    Often, as in the example of Figure 1, such a cluster-distinct network  will be much simpler than $N$ in having fewer vertices and arcs.  The additional hypothesis of cluster-distinctness can simplify the calculation of $M$.  If there is a CSD map from $N$ to $M$, then there will be a wired lift of $M$ in $N$, yielding properties of the more complicated network $N$ which may assist in its reconstruction.  

For a nontrivial example, in \cite{dre10} a cluster $C$ is called a \emph{tight} cluster of $N$ provided that $C$ is nonempty and whenever there is an undirected path from $c \in C$ to $d \in X-C$, then there exists a vertex $w$ on the path such that $cl(w) = C$.  It is easy to show that a cluster $C$ is a tight cluster of $N$ if and only if it is a tight cluster of $ClDis(N)$.

CSD maps exist whose images are trees.  Of special interest, however, is the possibility that there might be other classes of networks more general than trees but not as general as cluster-distinct networks.  For example, one might consider networks that are both cluster-distinct and tree-child \cite{crv09}.  Simple extensions of the results in this paper would lead to a CSD map from $N$ to such a network and a wired lift of such a network into $N$.  There are many other possibilities.

Future work should study more relationships between $N$ and $M$ if there is a CSD map from $N$ to $M$, possibly with additional assumptions.  

Other relationships between networks have been proposed, such as a reduction $R(N)$ of the network $N$ \cite{mor04}.  It is easy, however,  to construct examples showing that there need not be a CSD map from $N$ to $R(N)$.  

This paper explicitly dealt with networks with vertex set $V$ in which the set $X$ of species was in one-to-one correspondence with the set of leaves via a one-to-one map $\phi: X \to V$.  A more general notion of an $X$-network is that the map $\phi$ need not be one-to-one and must only have image containing the set of leaves.  In this situation most of the results go through with slightly different statements.  
A digraph map would require $f(\phi(x)) = \phi(x)$.

\textbf{Acknowledgments.}

I would like to thank Maria Axenovich for helpful references.  Thanks also to Francesc Rossell\'o and to the anonymous referees for helpful corrections and suggestions about an earlier version of this paper.

\begin{IEEEbiography}{Stephen Willson}
\includegraphics[width=1 in, bb = 0 0 295 329, clip]{WillsonSclose2.pdf}

Stephen J. Willson received his A.B. in Mathematics from Harvard in 1968.  In 1973 he received his Ph.D. in Mathematics from the University of Michigan in Ann Arbor.  His dissertation was in algebraic topology under the supervision of A.G. Wasserman.

He went to Iowa State University in Ames, Iowa in 1973, where he is currently University Professor and Janson Professor of Mathematics.  His research interests include phylogenetics, fractals, and game theory.  His hobbies include classical piano, choral singing, bird-watching, bicycling, and kayaking.  
\end{IEEEbiography}

\end{document}